\definecolor{colorp1}{RGB}{229,245,249}
\definecolor{colorp2}{RGB}{153,216,201}
\definecolor{colorp3}{RGB}{44,162,95}
\theoremstyle{plain}
\newtheorem{theorem}{Theorem}
\newtheorem{lemma}{Lemma}
\newtheorem{proposition}{Proposition}
\theoremstyle{definition}
\newtheorem{definition}{Definition}
\newtheorem{remark}{Remark}
\newtheorem{notations}{Notations}
\newtheorem{example}{Example}[section]
\newcommand{\fw}{\mathrel{\longrightarrow}}
\newcommand{\bw}{\mathrel{\rightsquigarrow}}
\newcommand{\fbw}{\mathrel{\twoheadrightarrow}} 
\newcommand{\red}{\mathrel{\fw}}
\newcommand{\revred}{\mathrel{\bw}}
\newcommand{\redl}[1]{\mathrel{\overset{#1}{\red}}}
\newcommand{\revredl}[1]{\mathrel{\overset{#1}{\revred}}}
\newcommand{\fbl}[1]{\mathrel{\overset{#1}{\fbw}}}
\newcommand{\idmem}[2]{#1 : #2}
\newcommand{\compfleche}[3]{\overset{\idmem{#1}{#2}}#3}
\newcommand{\fwlts}[2]{\compfleche{#1}{#2}{\fw}}
\newcommand{\bwlts}[2]{\compfleche{#1}{#2}{\bw}}
\newcommand{\fbwlts}[2]{\compfleche{#1}{#2}{\fbw}}
\newcommand{\congru}{\equiv}
\newcommand{\sbisim}{\sim}
\newcommand{\scbisim}{\mathrel{\sbisim^{\tau}}}
\newcommand{\sbbisim}{\mathrel{\dot{\sbisim}^{\tau}}}
\newcommand{\sbfbb}{\mathrel{\dot{\sbisim^{\tau}}}}
\newcommand{\sbfbc}{\mathrel{\sbisim^{\tau}}}
\newcommand{\Forw}[1]{F_{#1}}
\newcommand{\Backw}[1]{B_{#1}} 
\newcommand{\emptymem}{\emptyset}
\newcommand{\orig}[1]{O_{#1}}
\newcommand{\fork}{\curlyvee}
\DeclareMathOperator{\ids}{\mathsf{I}}
\DeclareMathOperator{\address}{ad}
\DeclareMathOperator{\collapse}{collapse}
\DeclareMathOperator{\card}{Card}
\newcommand{\funaddress}[3]{\address_{#1}(#2, #3)}
\newcommand{\encc}[1]{\ensuremath{[\![#1]\!]}}
\newcommand{\enc}[1]{\encc{#1}}
\newcommand{\encr}[1]{\ensuremath{[\![#1]\!]}}
\newcommand{\names}{\ensuremath{\mathsf{N}}}
\newcommand{\labels}{\ensuremath{\mathsf{L}}}
\newcommand{\set}[1]{\ensuremath{\{#1\}}}
\newcommand{\out}[1]{\bar{#1}}
\newcommand{\power}{\ensuremath{\mathcal{P}}}
\newcommand{\conf}{\ensuremath{\mathcal{C}}}
\newcommand{\rel}{\ensuremath{\mathcal{R}}}
\newcommand{\srel}{\ensuremath{\mathcal{R}}}
\newcommand{\mem}[1]{\ensuremath{\langle#1\rangle}}
\newcommand{\labl}{\ell}
\newcommand{\restr}{\mathord{\upharpoonleft}}
\newcommand{\erase}{\ensuremath{\varepsilon}}
\newcommand{\st}{s.t.\ }
\newcommand{\withoutlog}{w.l.o.g.\xspace}
\newcommand{\BNFsepa}{\enspace \Arrowvert \enspace}
\title{Reversible Barbed Congruence on Configuration Structures
\thanks{This work was partly supported by the ANR-14-CE25-0005 \href{http://lipn.univ-paris13.fr/~mazza/Elica/}{ELICA} and the ANR-11-INSE-0007 \href{http://www.pps.univ-paris-diderot.fr/~jkrivine/ANR/REVER/ANR_REVER/Welcome.html}{REVER}.}
}
\author{
Clément Aubert
\institute{INRIA}
\institute{Université Paris-Est, LACL (EA 4219), UPEC, F-94010 Créteil, France}
\email{clement.aubert@lacl.fr}
\and
Ioana Cristescu
\institute{Univ. Paris Diderot, Sorbonne Paris Cité, P.P.S., UMR 7126, F-75205 Paris, France}
\email{ioana.cristescu@pps.univ-paris-diderot.fr}
}
\begin{document}
\maketitle

\begin{abstract}
	A standard contextual equivalence for process algebras is strong barbed congruence.
	Configuration structures are a denotational semantics for processes in which one can define equivalences that are more discriminating, i.e.\ that distinguish the denotation of terms equated by barbed congruence.
	Hereditary history preserving bisimulation (HHPB) is such a relation. 	We define a strong back and forth barbed congruence using a reversible process algebra and show that the relation induced by the back and forth congruence is equivalent to HHPB, providing a contextual characterization of HHPB.
\end{abstract}

\section*{Introduction}
\addcontentsline{toc}{section}{Introduction}
A standard notion of equivalence for process algebras identifies processes that interact the same way with the environment.
Reduction congruence~\cite{Milner1992} is a standard relation that equates terms capable of simulating each other's reductions in any context.
However observing only the reductions is a too coarse relation.
A predicate, called a \emph{barb}, is then defined to handle an extra observation on processes: the channel on which they communicate with the environment.

Configuration structures---also called \emph{stable families}~\cite{Winskel1982} or \emph{stable configuration structures}~\cite{Glabbeek2001}---are an extensional representation of processes, which explicit all possible future behaviours.
It consists of a family of sets, where each set is called a \emph{configuration} and stands for a reachable state in the run of the process. The elements of the sets, called \emph{events}, represent the actions the process triggered so far.
The inclusion relation between configurations stands for the possible paths followed by the execution.
The encoding of terms of the Calculus of Communicating Systems (CCS)---a simple process algebra---in configuration structures~\cite{Winskel1982,Glabbeek2001} settled configuration structures as a denotational model for concurrency.

Configuration structures are \enquote{true concurrency} models, as opposed to process algebras, which use an interleaving representation of concurrency.
It is hard to deduce in an interleaving semantics the relationships between events, such as whether two events are independent or not, whereas they are explicit or easily inferred in a truly concurrent semantics.

On such structures, the equivalence relations defined are more discriminating: it is possible to move \enquote{up and down} in the lattice, whereas in the operational semantics, only forward transitions have to be simulated.
As an example, consider the processes $a.0|b.0$ and $a.b.0+b.a.0$ that are bisimilar in CCS but whose causal relations between events differ.
In particular we investigate hereditary history preserving bisimulation (HHPB), which equates structures that can simulate each others' forward and backward moves.
It is the canonical equivalence on configuration structures as it respects the causality and concurrency relations between events and admits a categorical representation~\cite{Joyal1996b}.

Reversibility allows to define HHPB in an operational setting, by simply adding to processes the capability to undo previous computations.
A term can then either continue its forward execution or backtrack up to a point in the past and resume from there.
Reversible process algebras are interesting in their own right~\cite{Lanese2010,Cristescu2013}, but we focus in this paper on their capability to simulate the back-and-forth behaviour of configuration structures.
To ensure that the backward reduction of CCS indeed corresponds to the backward moves of its denotational representation, one has to prove that the \emph{labelled transition system} is \emph{prime}~\cite{Nielsen1981}.
It was already done for CCSK~\cite{Phillips2007}, a reversible calculus that has a causal consistent backtracking machinery.
In this paper we use RCCS~\cite{Danos2004} a causal consistent reversible variant of CCS whose syntax is given in \autoref{rccs-syntax}.

In reversible calculi one is also interested in a contextual equivalence for processes. Traditional equivalences, defined only on forward transitions, are inappropriate for processes that can do back-and-forth reductions.
Strong back-and-forth bisimulation~\cite{Lanese2010} is more adapted but it is not contextual.
Hence we introduce the barbed back-and-forth congruence on RCCS terms (\autoref{contextual-equiv-rccs}) which corresponds to the barbed congruence of CCS except that backwards reductions are also observed.

Configuration structures (\autoref{conf-struct}) lacks a notion of contextual equivalence, because the context is a notion specific to the operational semantics.
Hence it makes sense to consider context only for configuration structures that represent an operational term (\autoref{RCCS-in-conf}).
We introduce in \autoref{def-bisim-cs} the correct notions and relations on those structures.
The contextual equivalence on processes induces a relation on the denotation of these processes and this relation corresponds to HHPB
(\autoref{sec:res}).

Similarly to the proof in CCS, the correspondence between a contextual equivalence and a non contextual one necessits to approximate hhbp with (a family of) inductive relations defined on configuration structures.
If we are interested only in the forward direction (as in CCS), the inductive reasoning starts with the empty set, and constructs the bisimilarity relation by adding pairs of configuration reachable in the same manner from the empty set. However, to approximate hhbp, we need to have an inductive reasoning on the backward transition as well (\autoref{ForwBackwDef}).
These relations are of major importance to prove our main theorem (\autoref{main-thm}), as they re-introduce the possibility of an inductive reasoning thanks to a stratification of the HHPB relation.

Hhpb is equivalent to strong bisimulation on reversible CCS~\cite{Phillips2012}, thus it can be characterised as a non contextual equivalence on processes. One can then prove the main result of the paper by showing that in RCCS strong bisimulation and strong barbed congruence equate the same terms. We chose to use configuration structures instead, as we plan to investigate weak equivalences on reversible process algebra and their correspondence in denotational semantics.

Our work is restrained to processes that forbid \emph{some sort of auto-concurrency} (see \autoref{rem-concur}) and that are \enquote{collapsed} (\autoref{def:collapse}): we need to uniquely identify open configurations using only the label and the order of the events.
The \enquote{equidepth auto-concurrency}~\cite{Phillips2012} does not help.
\section{RCCS syntax and bisimulation}
\label{sec:rccs}
RCCS is a reversible variant of CCS, that allows computations to backtrack, hence introducing the notions of \emph{forward} and \emph{backward} transitions.
A mechanism of memories attached to processes store the relevant information to eventually do backward steps.

In a sequential setting backtracking follows the exact order of the forward computation. This is too strict for a concurrent calculus where independent processes can fire independent actions. The order of these actions in the forward direction is just \emph{temporal} and \emph{not causal}, and thus it should be allowed to backtrack them in any order.
On the other hand, too much liberty in backtracking could allow the system to access states that were not reachable with forward transitions alone.

\subsection{RCCS syntax}
\label{rccs-syntax}
\begin{notations}
	Let $\names=\set{a,b,\dots}$ a set of \emph{names}, \(\ids=\set{i,j,\dots}\) a set of \emph{identifiers}.
	An \emph{action} is an input (resp.\ output) on a channel $a$, labelled $a$ (resp.\ $\out{a}$), or a synchronisation with the label \( (a,\out a)\), sometimes denoted $\tau$.
	Each action \(a\) has a dual written \(\out{a}\), we let \(\out{\out{a}} = a\) and \(\out{\tau} = \tau\). Denote \(\labels=\set{\alpha,\beta,..}\) the set of \emph{labels}.
\end{notations}
CCS processes are build using prefix, sum, parallel composition and restriction.
RCCS processes, also called \emph{monitored processes}, are built upon CCS processes by adding a memory \(m\) that acts as a stack of the previous computations.
Each entry in the memory is called an \emph{event} and has a unique identifier.
The usual~\cite{Danos2005} RCCS processes grammar is recalled in \autoref{rccs-gram}.
A memory \(\mem{i, \alpha, P}\) contains an \enquote{identifier} \(i\) that \enquote{tags} transitions: it is especially useful in the case of synchronisation (both forward and backward), for it identifies which two processes interact.
The label \(\alpha\) marks which action has been fired (in the case of a forward transition), or what action should be restored (in the case of a backward move).
Finally, \(P\) saves the whole process that has been erased when firing a sum.
The \enquote{fork symbol} \(\fork\) marks that the memory of a parallel composition has been split down to the two parts of the parallel composition.
It was handled with \(\mem{1}\) and \(\mem{2}\) (Left- and Right-fork) in previous work \cite[p.~295]{Danos2004}.

\begin{figure}
	\begin{align}
		\gamma & := a \BNFsepa \out{a}\BNFsepa\hdots & \alpha,\beta & := \gamma \BNFsepa \tau \tag{Actions} \\
		m &:= \emptymem \BNFsepa \fork . m \BNFsepa \mem{i, \alpha, P} . m \tag{Memories}\\
		P, Q & := 0 \BNFsepa \alpha.P \BNFsepa \alpha.P+\beta.Q \BNFsepa P \vert Q\BNFsepa (a)P \tag{CCS processes}\\
		R, S & := m \vartriangleright P \BNFsepa R \vert R \BNFsepa (a)R \tag{RCCS processes}
	\end{align}
	\caption{RCCS processes grammar}
	\label{rccs-gram}
\end{figure}

We can easily retrieve a CCS process from an RCCS one by erasing the memories:
\[
	\erase(m\vartriangleright P)=P\quad\erase(R\vert S)=\erase(R)\vert\erase(S)\quad\erase((a)R)=(a)\erase(R) \quad\erase(R + S)=\erase(R) + \erase(S)\]
Structural congruence on monitored processes is the smallest equivalence relation up to uniform renaming of identifiers generated by the following rules:

\begin{center}
	\begin{tabular}{c c}
		\multirow{2}*{
		\begin{prooftree}
		\Hypo{P \congru Q}
		\Infer1{\emptymem\rhd P \congru\emptymem\rhd Q}
		\end{prooftree}
		} &                                                                                     
		\(m \vartriangleright (P \vert Q) \congru (\fork . m \vartriangleright P \vert \fork . m \vartriangleright Q)\)\\
		  & \(m \vartriangleright (a)P \congru (a)m \vartriangleright P\text{ with }a\notin m\) 
	\end{tabular}
\end{center}
The left rule implies that all equivalence for CCS processes holds for RCCS processes with an empty memory.
The right rules respectively distributes the memory between two forking processes (top) and moves the restrictions at the process level (bottom).

The labelled transition system (LTS) for RCCS is given by the rules of \autoref{ltsrules}.
In the transitions \(\redl{i:\alpha}\) (resp.\ \(\revredl{i:\alpha}\)) for the \emph{forward} (resp.\ \emph{backward}) action, we have that \(i \in I\) is the event identifier, \(\ids(m)\) (resp.\ \(\ids(S)\)) is the set of identifiers occurring in \(m\) (resp.\ in \(S\)).
We use \(\fbwlts{i}{\alpha}\) as a wildcard for \(\redl{i:\alpha}\) or \(\revredl{i:\alpha}\), and if there are indices \(i_1, \hdots, i_n\) and labels \(\alpha_1, \hdots, \alpha_n\) such that \(R_1 \fbwlts{i_1}{\alpha_1} \hdots \fbwlts{i_n}{\alpha_n} R_n\), then we write \(R_1 \fbw^{\star} R_n\). We sometimes omit the identifier or the label in the transition.
The trace is unique up to renaming of the indices.

When a prefix is consumed we add in the memory an event consisting of an unique identifier, the label consumed and the discarded part of the non-deterministic sum. Then backtracking removes an event at the top of a memory and restores the prefix and the non-deterministic sum.
Synchronization, forward or backward (syn), requires the two synchronization partners to agree on the event identifier and trigger the transitions simultaneously.
The requirement that \(i \notin \ids(S)\) for the parallel composition (par.) ensures the uniqueness of the event identifiers in the forward direction and prevents a part of a previous synchronization to backtrack alone in the backward direction.

\begin{figure}
	{\centering
		\begin{tabular}{c c c c}
			\begin{prooftree}
			\Hypo{R \fbwlts{i}{\gamma} R'}
			\Hypo{S \fbwlts{i}{\out{\gamma}} S'}
			\Infer2[syn.]{R \vert S \fbwlts{i}{\tau} R' \vert S'}
			\end{prooftree}
			  &   
			\begin{prooftree}
			\Hypo{R \fbwlts{i}{\alpha} R'}
			\Infer1[par.]{R \vert S \fbwlts{i}{\alpha} R' \vert S}
			\end{prooftree}
			  &   
			\begin{prooftree}
			\Hypo{R \fbwlts{i}{\alpha} R'}
			\Hypo{a \notin \alpha}
			\Infer2[res.]{(a) R \fbwlts{i}{\alpha} (a) R'}
			\end{prooftree}
			  &   
			\begin{prooftree}
			\Hypo{R_1 \congru R \fbwlts{i}{\alpha} R' \congru R_1'}
			\Infer1[$\congru$]{R_1 \fbwlts{i}{\alpha} R_1'}
			\end{prooftree}
			\\[2em]
			\multicolumn{2}{c}{
			\begin{prooftree}
			\Hypo{}
			\Infer1[act.]{m \vartriangleright \alpha . P + Q \fwlts{i}{\alpha} \mem{i, \alpha, Q} . m \vartriangleright P}
			\end{prooftree}
			}
			  &   
			\multicolumn{2}{c}{
			\begin{prooftree}
			\Hypo{}
			\Infer1[act.$_*$ ]{\mem{i, \alpha, Q} . m \vartriangleright P \bwlts{i}{\alpha} m \vartriangleright \alpha . P + Q }
			\end{prooftree}
			}
		\end{tabular}
		\\[2em]
		The rule act. and act$_*$ apply iff \(i \notin \ids(m)\), the rule par. applies iff \(i \notin \ids(S)\).
	}
	\caption{Rules of the LTS}\label{ltsrules}
\end{figure}

\begin{example}
	The process \(\fork . \mem{i, \alpha, \alpha'.0} . \emptymem \triangleright P \mid \mem{j, \beta, \beta'.0}. \emptymem \triangleright Q\) highlights that not all syntactically correct processes have an operational meaning.
	This term cannot be obtained by a forward computation from a CCS process, somehow \enquote{its memory is broken}.
	Without \(\fork\), one could backtrack to \(\emptyset \vartriangleright \alpha . P + \alpha' .0 \mid \emptyset \vartriangleright \beta . Q + \beta' . 0\), but this terms violate the structural congruence.
\end{example}

The semantically correct processes are called \emph{coherent} and are defined as follows:

\begin{definition}[Coherent process and \(\orig{R}\)]
	A RCCS process \(R\) is \emph{coherent} if there exists a CCS process \(P\) such that \(\emptymem \vartriangleright P \fw^{\star} R\).
	This process \(P\) is unique up to structural congruence and we write it \(\orig{R}\).
\end{definition}

Backtracking is not deterministic, but it is noetherian and confluent~\cite[Lemma~1]{Danos2005}, hence the uniqueness.
Actually, coherence of processes comes from the coherence relation defined on memories~\cite[Definition~1]{Danos2004} and implies that in a coherent term, memories are unique.
Moreover, coherence is preserved by transitions and structural congruence.

\subsection{A contextual equivalence for RCCS}
\label{contextual-equiv-rccs}
Let us now revisit the barbed congruence of CCS~\cite{Milner1992} in the case of RCCS.
For that we need the right notions of context and barb in the reversible setting.

Choosing the right notion of context is subtle.
A context has to become an executable process regardless of the process instantiated with it. We can distinguish three types of contexts: with an empty memory, with a non empty but coherent memory (i.e. the context can backtrack up to an empty memory regardless of the process instantiated with) or with a non coherent memory.
The later is left as future work, while the first two are equivalent: we will only, \withoutlog, consider contexts without memory.

\begin{definition}[CCS Context]
	\label{context-def}
	A context is a process with a hole:
	$C:= [~] \BNFsepa \alpha.C \BNFsepa C+P \BNFsepa C \vert P \BNFsepa (a)C$
\end{definition}
We can only instantiate a context with an RCCS process $R$ if the process has an empty memory, i.e. $R=\emptyset\rhd P$. We use the notation $C[\emptyset\rhd P]$ to denote the process $\emptyset\rhd C[P]$.
\begin{definition}[Strong commitment (barb)]
	\label{def:barb}
	We write \(R \downarrow_\alpha\) if there exists \(i \in I\) and \(R'\) such that \(R \fwlts{i}{\alpha} R'\).
\end{definition}

\begin{definition}	\label{def:sbfc_rccs}
	A \emph{strong back-and-forth barbed bisimulation} is a symmetric relation on coherent processes \(\sbfbb\) such that if \( R \sbfbb S\), then
	\begin{align*}
		R \bwlts{i}{\tau} R' \implies \exists S' \text{ \st} S \bwlts{i}{\tau} S' \text{ and } R' \sbfbb S' \tag{Back} \label{arriere} \\
		R\fwlts{i}{\tau} R' \implies \exists S' \text{ \st} S \fwlts{i}{\tau} S'\text{ and } R' \sbfbb S' \tag{Forth} \label{avant}    \\
		R \downarrow_{a} \implies S \downarrow_{a}. \tag{Barbed} \label{barbe}                                                         
	\end{align*}
	We write \( R \sbfbc S\) and define the \emph{strong back-and-forth barbed congruence} if \( R \sbfbb S\) and for all context \(C[\cdot]\), \(C [\orig{R}] \sbfbb C[\orig{S}]\).
\end{definition}

\begin{lemma}
	\label{lem-orig-sbfbc}
	\(R \sbfbc S \implies \orig{R} \sbfbc \orig{S}\).\end{lemma}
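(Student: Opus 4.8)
The plan is to unfold the two occurrences of $\sbfbc$ in the statement and observe that, once one knows that taking the origin is idempotent, the conclusion is essentially verbatim part of the hypothesis.

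First I would record the bookkeeping facts. Since $R \sbfbc S$ entails $R \sbfbb S$, both $R$ and $S$ are coherent, so $\orig R$ and $\orig S$ are well defined CCS processes. The RCCS process $\emptyset \rhd \orig R$ is itself coherent---witnessed by the empty trace $\emptyset \rhd \orig R \fw^{\star} \emptyset \rhd \orig R$---and its origin is again $\orig R$, unique up to $\congru$; hence $\orig{\emptyset \rhd \orig R} \congru \orig R$, and symmetrically for $S$. I would also note that structural congruence is absorbed into $\sbfbb$: by the $\congru$ rule of the LTS in \autoref{ltsrules}, two $\congru$-related processes simulate each other's forward and backward moves and carry the same barbs, so $P \congru Q$ implies $\emptyset \rhd P \sbfbb \emptyset \rhd Q$; combined with the fact that structural congruence is a congruence (so $\emptyset \rhd C[\orig{\emptyset\rhd\orig R}] \congru \emptyset \rhd C[\orig R]$ for every context $C$), this gives $C[\orig{\emptyset\rhd\orig R}] \sbfbb C[\orig R]$ for every $C$, and likewise for $S$.

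Next I would expand the goal. Reading $\orig R$ as the coherent process $\emptyset \rhd \orig R$, \autoref{def:sbfc_rccs} requires (i) $\emptyset \rhd \orig R \sbfbb \emptyset \rhd \orig S$, and (ii) for every context $C$, $C[\orig{\emptyset\rhd\orig R}] \sbfbb C[\orig{\emptyset\rhd\orig S}]$. For (i) I would instantiate the contextual clause of the hypothesis $R \sbfbc S$ with the empty context $C = [~]$, which is a legitimate context by \autoref{context-def}: this yields $[~][\orig R] \sbfbb [~][\orig S]$, i.e.\ precisely $\emptyset \rhd \orig R \sbfbb \emptyset \rhd \orig S$. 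For (ii), rewriting $\orig{\emptyset\rhd\orig R}$ as $\orig R$ up to $\congru$ and chaining with the $\congru$-absorption from the previous paragraph via transitivity of bisimilarity, it suffices to prove $C[\orig R] \sbfbb C[\orig S]$ for every $C$---which is exactly the contextual clause of $R \sbfbc S$.

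Both (i) and (ii) therefore hold, so the lemma follows; observe that only the contextual part of $R \sbfbc S$ is used, not the plain bisimilarity $R \sbfbb S$. The only delicate points---and the closest thing here to an obstacle---are the two preliminary observations: that $\emptyset \rhd \orig R$ is coherent with origin $\orig R$, so that taking the origin is idempotent up to $\congru$; and that $\congru$ is contained in $\sbfbb$, so that the ``up to $\congru$'' built into the definition of the origin does not create a mismatch once an arbitrary context is wrapped around it. Both are immediate: the first from the noetherian--confluent backtracking underlying the uniqueness of origins, the second from the $\congru$ rule of the LTS together with the left rule of structural congruence on monitored processes.
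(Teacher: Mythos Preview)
Your proposal is correct and, since the paper's own proof is just the sentence ``The proof is straightforward,'' your argument is essentially a careful unfolding of what the paper leaves implicit. The two points you isolate---idempotence of the origin up to $\congru$ and absorption of $\congru$ into $\sbfbb$---are precisely the bookkeeping that makes the claim immediate; instantiating the contextual clause with the empty context for part (i) is the intended shortcut.
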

The proof is straightforward. The converse does not hold as \(R\) and \(S\) can be any derivative of \(\orig{R}\).

\section{Configuration structures}
\label{conf-struct}

We use configuration structures~\cite{Winskel1982,Glabbeek2001} as a denotational semantics for processes.
We recall the definitions and the operations necessary to encode processes, and refer to Winskel's work for the proofs.

\begin{notations}
	Let $E$ be a set, $\subseteq$ be the usual set inclusion relation and $C$ be a family of subsets of $E$.
	For $X\subseteq C$ we say that $X$ is \emph{compatible} and write $X\uparrow^{\text{fin.}}$ if $\exists y\in C$ finite such that $\forall x\in X$, $x\subseteq y$.
\end{notations}

\begin{definition} 	\label{def:conf_str}
	A \emph{configuration structure} $\mem{E,C}$ is a set \(E\) and $C \subseteq \power (E)$ satisfying:
	\begin{align*}
		\forall x \in C, \forall e \in x, \exists z \in C \text{ finite \st }e\in z \text{ and } z \subseteq x \tag{finitness} \label{def-finitness}                                                            \\
		\forall x \in C, \forall e, e' \in x, \text{ if } e\neq e' \text{ then }\exists z \in C, z \subseteq x \text{ and } (e\in z \iff e'\notin z) \tag{coincidence freeness} \label{def-coincidence-freenes} \\
		\forall X \subseteq C \text{ and } X \uparrow^{\text{fin.}} \Rightarrow \cup X \in C\tag{finite completness}                                                                                            \\
		\forall x,y\in C, \text{ if } x\cup y\in C \text{ then } x\cap y\in C \tag{stability} \label{stability}                                                                                                 
	\end{align*}
	
	A \emph{labelled configuration structure} $\conf=\mem{E,C,\labl}$ is a configuration structure endowed with a \emph{labelling function} $\labl:E\to\labels$.
	All configurations structures from now on will be supposed to be labelled.
\end{definition}

The elements of $E$ are called \emph{events} and subsets of $C$ \emph{configurations}. Intuitively, events are the actions occurring during the run of a process, while a configuration represents a state reached at some point.
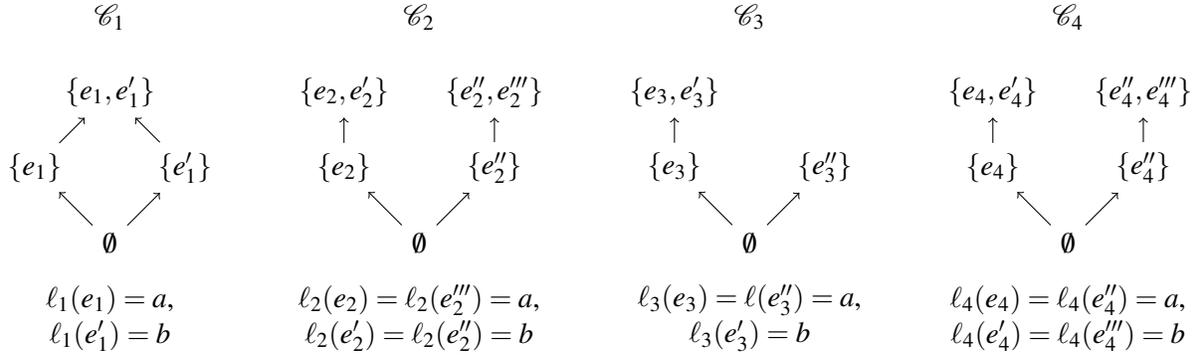
\begin{figure}
	\begin{tikzpicture}
		\node (nom) at (0, 2) {\(\conf_1\)};
		\node (emptyset) at (0, -1) {\(\emptyset\)};
		\node (a) at (-1, 0) {\(\{e_1\}\)};
		\node (b) at (1, 0) {\(\{e_1'\}\)};
		\node (ab) at (0, 1) {\(\{e_1, e_1'\}\)};
		\draw [->] (emptyset) -- (a);
		\draw [->] (emptyset) -- (b);
		\draw [->] (a) -- (ab);
		\draw [->] (b) -- (ab);
		\node[align=center] (labels) at (0, -2) {\(\labl_1(e_1) = a\),\\ \(\labl_1(e'_1) = b\)};
	\end{tikzpicture}
	\hfill
	\begin{tikzpicture}
		\node (nom) at (0, 2) {\(\conf_2\)};
		\node (emptyset) at (0, -1) {\(\emptyset\)};
		\node (a) at (-1, 0) {\(\{e_2\}\)};
		\node (b) at (1, 0) {\(\{e''_2\}\)};
		\node (ab) at (-1, 1) {\(\{e_2, e'_2\}\)};
		\node (ba) at (1, 1) {\(\{e''_2, e'''_2\}\)};
		\draw [->] (emptyset) -- (a);
		\draw [->] (emptyset) -- (b);
		\draw [->] (a) -- (ab);
		\draw [->] (b) -- (ba);
		\node[align=center] (labels) at (0, -2) {\(\labl_2(e_2) = \labl_2(e_2''') =a\),\\ \(\labl_2(e_2') = \labl_2(e_2'') =b\)};
	\end{tikzpicture}
	\hfill
	\begin{tikzpicture}
		\node (nom) at (0, 2) {\(\conf_3\)};
		\node (emptyset) at (0, -1) {\(\emptyset\)};
		\node (a) at (-1, 0) {\(\{e_3\}\)};
		\node (b) at (1, 0) {\(\{e_3''\}\)};
		\node (ab) at (-1, 1) {\(\{e_3, e'_3\}\)};
		\draw [->] (emptyset) -- (a);
		\draw [->] (emptyset) -- (b);
		\draw [->] (a) -- (ab);
		\node[align=center] (labels) at (0, -2) {\(\labl_3(e_3) = \labl(e_3'') = a\),\\ \(\labl_3(e'_3) = b\)};
	\end{tikzpicture}
	\hfill
	\begin{tikzpicture}
		\node (nom) at (0, 2) {\(\conf_4\)};
		\node (emptyset) at (0, -1) {\(\emptyset\)};
		\node (a) at (-1, 0) {\(\{e_4\}\)};
		\node (b) at (1, 0) {\(\{e''_4\}\)};
		\node (ab) at (-1, 1) {\(\{e_4, e'_4\}\)};
		\node (ba) at (1, 1) {\(\{e''_4, e'''_4\}\)};
		\draw [->] (emptyset) -- (a);
		\draw [->] (emptyset) -- (b);
		\draw [->] (a) -- (ab);
		\draw [->] (b) -- (ba);
		\node[align=center] (labels) at (0, -2) {\(\labl_4(e_4) = \labl_4(e_4'') =a\),\\\(\labl_4(e_4') = \labl_4(e_4''') =b\)};
	\end{tikzpicture}
	\caption{Four examples of configuration strutures}\label{ex_unif}
\end{figure}

\begin{example}
	In \autoref{ex_unif}, the configuration structure \(\conf_1\) have two events \(e_1\), \(e_1'\), with labels respectively $a$ and $b$, that are concurrent.
	Configuration \(\set{e_1}\) then corresponds to the process that fired action $a$.
	Its only possibility is then to fire \(b\) and reach the state $\set{e_1, e_1'}$.
	A process corresponding to this structure is \(a.0 | b.0\).
	The configuration structure \(\conf_2\) corresponds to a process where the events labelled respectively $a$ and $b$ are causally dependent, as in \(a.b.0+b.a.0\).
\end{example}

The configuration structure corresponding to a process $P$ is defined inductively on the syntax of $P$. Hence the encoding of a process is built from the encoding of its parts, unlike other models such as process graphs (or prime graphs) for CSSK~\cite{Phillips2007}. Moreover, configuration structures are \emph{compositional} in the sense that we can compose configuration structures into new structures.
Compositionality is an important feature as it allows us to reason on the context of a process.

Henceforth we detail how the operations of process algebras are translated on configuration structures, which in some cases have a nice categorical interpretation.
While the underlying category theory is not used in the paper, it can help in understanding how these structures behave.

\begin{definition}[Category of labelled configuration structures]
	A morphism of labelled configurations structures $f:\mem{E_1,C_1,\labl_1}\to\mem{E_2,C_2,\labl_2}$ is a partial function on the underlying sets $f:E_1\to E_2$ that is:
	\begin{align}
		\forall x\in C_1, f(x)=\set{f(e)~\vert~ e\in x}\in C_2 \tag{configuration preserving}           \\
		\forall x\in C_1, \forall e_1, e_2\in x, f(e_1)=f(e_2) \implies e_1=e_2 \tag{locally injective} \\
		\forall x \in C_1, \forall e \in x, \labl_1(e)=\labl_2(f(e)) \tag{label preserving}             
	\end{align}
	
	The configuration structures and their morphisms form a category.
\end{definition}

\begin{definition}[Operation on configuration structures {\cite{Winskel1982}}]
	\label{cat-op-def}
	We let $\conf_1=\mem{E_1,C_1,\labl_1}$, $\conf_2=\mem{E_2,C_2,\labl_2}$ be two configuration structures, set \(E^\star=E \cup \set{\star}\) and define the following operations:

		\paragraph{Product} \label{defproduct}
		      Define \emph{the product of \(\conf_1\) and \(\conf_2\)} as $\conf=\conf_1\times\conf_2$, for $\conf=\mem{E,C,\labl}$, where $E = E_1^\star\times E_2^\star$ is the product on sets with the projections $\pi_1$, $\pi_2$ and
		      \[
		      	x \in C \iff
		      	\begin{dcases*}
		      		\pi_1(x) \in C_1\text{ and }\pi_2(x)\in C_2,                                   \\
		      		\pi_1: \conf\to\conf_1\text{ and }\pi_2:\conf\to\conf_2 \text{ are morphisms}, \\
		      		x \text{ satisfies \eqref{def-finitness} and \eqref{def-coincidence-freenes}.} 
		      	\end{dcases*}
		      \]
		      The labelling function $\labl$ is defined as $\labl(e)=(\labl_1(e_1),\labl_2(e_2))$, where $\pi_1(e)=e_1$ and $\pi_2(e)=e_2$.
		      
		\paragraph{Coproduct}\label{defcoproduct}
		      Define \emph{the coproduct of \(\conf_1\) and \(\conf_2\)} as $\conf=\conf_1+\conf_2$, for $\conf=\mem{E,C,\labl}$, where $E=(\set{1}\times E_1)\cup(\set{2}\times E_2)$ and $C=\set{\set{1}\times x~|~x\in C_1}\cup\set{\set{2}\times x~|~x\in C_2}$.
		      The labelling function $\labl$ is defined as $\labl(e)=\labl_i(e_i)$ when $e_i\in E_i$ and $\pi_i(e_i)=e$.
		      
		\paragraph{Restriction}
		      Let $E'\subseteq E$.
		      Define \emph{the restriction of a set of events} as $\mem{E,C,\labl}\restr E'=\mem{E',C',\labl'}$ where $x'\in C'\iff x\in C, x\subseteq E'$. 		      \emph{The restriction of a name} is then $\mem{E,C,\labl}\restr E_a$ where $E_a=\set{e\in E~|~\labl(e)\neq\tau, a\in\labl(e)}$.
		      
		\paragraph{Prefix}
		      Define \emph{the prefix operation on configuration structures} as $\alpha.\mem{E,C,\labl}=\mem{e\cup E,C',\labl'}$, for $e\notin E$ where $x'\in C' \iff\exists x\in C, x'=x\cup e$ and \(\labl'(e) = \alpha\), and \(\forall e' \neq e\), \(\labl'(e') = \labl(e')\).
		      
		\paragraph{Relabelling}\label{relabel}
		      Define \emph{the relabelling of a configuration structure} as $\conf_1\circ\labl=\mem{E_1,C_1,\labl_1\circ\labl}$, where $\labl$ is a labelling function.
		      
		\paragraph{Parallel composition}\label{par-comp}
		      Define $\conf_1\|\conf_2=\big((\conf_1\times\conf_2)\circ\labl\big)\restr E$ where $\labl$ is defined as follows
		      \begin{align*}
		      	\labl(a)=a & \quad\labl(\tau)=\tau & \labl(a,\out a)=\labl(\out a, a)=\tau & \quad 
		      	\labl(\tau,a)=\labl(a,\tau)=0&\quad\labl(a,\out b)=\labl(\out b,a)=0
		      \end{align*}
		      and for $(\conf_1\times\conf_2)\circ\labl=\mem{E',C',\labl'}$ we have the set $E=\set{e\in E'~|~\labl'(e)\neq 0}$.

\end{definition}

In configuration structures $\mem{E,C,\labl}$ we denote $x\redl{e}x'$ the configurations $x,x'\in C$ such that $x=x'\cup\set{e}$ and with $x'\revredl{e}x$ the symmetric relation.
We use $x'\fbl{e} x$ for either $x\redl{e}x'$ or $x'\revredl{e}x$: if \(\labl(e) = \alpha\), we sometimes write $x \fbl{\alpha} x'$.

\begin{definition}[Partial order]
	\label{def:causality}
	Let $x\in C$ and $e_1,e_2\in x$. Then
	$e_1\leq_x e_2$ iff $\forall x_2\in C, x_2\subseteq x, e_2\in x_2\implies e_1\in x_2$.
\end{definition}
If $e_1\leq_x e_2$, we say that $e_1$ \emph{happens before} $e_2$ or that $e_1$ causes $e_2$ in the configuration $x$.
Morphisms on configuration structures reflect causality:
if $\pi:\conf_1\to\conf_2$ and for $e_1,e_2\in x$ and $x\in C_1$, if $\pi(e_1) \leq_{\pi(x)}\pi(e_2)$ then $e_1\leq_x e_2$.
\begin{definition}[Substructure]
	\label{def:inclusion}
	$\mem{E_1,C_1,\labl_1}\subseteq \mem{E_2,C_2,\labl_2}$ iff $E_1\subseteq E_2, C_1\subseteq C_2 \mbox{ and }\labl_1=\labl_2|_{E_1}$.
\end{definition}

\section{Encoding RCCS in configuration structures}
\label{RCCS-in-conf}

We start by encoding a CCS term into configuration structures and show an operational correspondence between the term and its encoding.
Intuitively, the configuration structure of a process without memory depicts all its possible future behaviour.
We also introduce a notion of context for configuration structures. 
Then we proceed to encode a RCCS term. A reversible process can do backward transitions but only up to a point: until it reaches the empty memory. We encode then a RCCS terms as an \enquote{address} in the configuration structure of its origin. This allows us to encode both the past and the future of a process in the same configuration structure. However the syntax of a process is not informative enough, hence we restrict the encoding to a class of processes.
Lastly we show an operational correspondence for RCCS terms and their encoding.

\subsection{Encoding CCS}
We start by encoding a term without memory, that is a CCS term. We do so by structural induction on the term using the operations defined previously (\autoref{cat-op-def}):
\begin{align*}
	\enc{P_1\vert P_2} & = \enc{P_1}\vert\enc{P_2} & \quad \enc{P_1+ P_2} & = \enc{P_1}+ \enc{P_2} & \quad \enc{\alpha.P} & = \alpha.\enc{P} & \quad \enc{\nu a.P} & = \enc{P}\restr E_a 
\end{align*}
Note that this encoding and its correspondence with CCS was first proposed by Winksell~\cite{Winskel1982}.

To show a strong bisimulation between a CCS process and its encoding, we introduce the following transformation of a configuration structures representing, intuitively, the structure we obtain after a transition: $\mem{E,C,\labl}\setminus x = \mem{E',C',\labl\restr E'}\text{ with }E'=\cup C'\text{ and }x'\in C'\iff\exists y\in C, x \subseteq y\text{ and }x'=y\setminus x.$

Intuitively, \(\conf\setminus x\) is the configuration resulting from the suppression of the events of \(x\) in all configurations of \(\conf\).
We call \emph{minimal} (with respect to the partial order in \autoref{def:causality}) an event whose singleton is a configuration.

\begin{proposition}
	Let \(x\) be a configuration in \(\conf\), then $\conf \setminus x$ is a configuration structure.
\end{proposition}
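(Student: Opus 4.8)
The plan is to unfold the definition $\conf\setminus x=\mem{E',C',\labl'}$ and check, one by one, the four axioms of \autoref{def:conf_str}. The whole argument turns on the observation that $C'=\set{y\setminus x\mid y\in C,\ x\subseteq y}$, that the map $\Phi\colon y\mapsto y\setminus x$ is an order-isomorphism from $\set{y\in C\mid x\subseteq y}$ onto $(C',\subseteq)$ with inverse $x'\mapsto x'\cup x$, and that every element of $C'$ is disjoint from $x$. First I would note $E'=\bigcup C'\subseteq E$, so $\labl'=\labl\restr E'$ is well defined, and then isolate the one non-bookkeeping fact used repeatedly: if $z,z_1,z_2\in C$ are all included in a single configuration $y\in C$, then $z\cup x\in C$ and $z_1\cap z_2\in C$. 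This is immediate from finite completeness (sets included in a common configuration form a compatible family) together with stability, using that the configurations in play are finite — the case throughout the paper, the calculus having no recursion.

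Granting this, the axioms are discharged by transport along $\Phi$. For \eqref{stability}: if $x_1',x_2',x_1'\cup x_2'\in C'$, put $y_i=x_i'\cup x\in C$; then $y_1\cup y_2=(x_1'\cup x_2')\cup x\in C$, so $y_1\cap y_2\in C$ by stability of $\conf$, and since $x\subseteq y_1\cap y_2$ while the $x_i'$ are disjoint from $x$, $\Phi(y_1\cap y_2)=x_1'\cap x_2'\in C'$. For finite completeness: a finite bound $y'\in C'$ of a compatible family $X'\subseteq C'$ lifts to the bound $y'\cup x=\Phi^{-1}(y')\in C$ of $X=\set{x'\cup x\mid x'\in X'}$, still finite, so $\bigcup X\in C$ by finite completeness of $\conf$, whence $\bigcup X'=(\bigcup X)\setminus x\in C'$. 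For \eqref{def-finitness} and \eqref{def-coincidence-freenes}: given $x'\in C'$ and one (resp.\ two) events of $x'$, set $y=x'\cup x\in C$, apply the corresponding axiom of $\conf$ at $y$ to obtain a witness $z\in C$ with $z\subseteq y$ ($z$ finite, resp.\ $z$ separating the two events), observe $z\cup x\in C$ by the isolated fact, and take $(z\cup x)\setminus x\in C'$: it is included in $z$, hence finite, it is included in $x'$, and since the events considered are not in $x$ it still separates them.

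The only real difficulty is that isolated fact — that enlarging a sub-configuration of $y$ by $x$ stays inside $C$ — which is exactly where stability and finite completeness of $\conf$ (and finiteness of configurations) are genuinely needed; everything else is routine manipulation of $\Phi$, $\Phi^{-1}$, and the disjointness of the members of $C'$ from $x$.
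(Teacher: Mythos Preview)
The paper states this proposition without proof, so there is no argument of the authors to compare against. Your proposal is sound and is the natural verification: set up the order-isomorphism $\Phi\colon\{y\in C\mid x\subseteq y\}\to C'$, $y\mapsto y\setminus x$, and transport each axiom of \autoref{def:conf_str} along it, the only nontrivial step being that a sub-configuration $z\subseteq y$ can be enlarged to $z\cup x\in C$ when $x\subseteq y$.

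One point worth flagging: your ``isolated fact'' invokes finite completeness with bound $y$, which requires $y$ to be \emph{finite}. You justify this by appealing to the absence of recursion in the calculus, and that is indeed how the paper uses the proposition (cf.\ the footnote to \autoref{ForwBackwDef}), so the argument goes through in context. If one wanted the statement at the generality of \autoref{def:conf_str}, a little more work is needed: one first observes that $x$ itself is a directed union of finite configurations (by iterated use of \eqref{def-finitness} and finite completeness), and then that for finite $z,x\subseteq y$ one can exhibit a finite bound in $C$ by taking, for each event of $z\cup x$, a finite sub-configuration of $y$ containing it and closing under the pairwise-union argument. This is standard in the stable-families literature and does not affect the shape of your proof; it only replaces the one-line appeal to finite completeness by a short approximation argument.
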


\begin{proposition}[Strong bisimulation between a CCS process $P$ and $\encc{P}$]
	\label{prop:encode_ccs}
	\begin{align}
		\text{If } P\redl{\alpha}Q\text{ then }\exists e\in\encc{P}\text{ minimal \st }\labl(e)=\alpha\text{ and }\encc{Q}=\encc{P}\setminus \set{e} \tag{Soundness} \label{soundness} \\
		\forall e\in\encc{P} \text{ minimal, }\exists Q\text{ \st }P\redl{\labl(e)}Q\text{ and } \encc{Q}=\encc{P}\setminus \set{e}\tag{Completeness} \label{completeness}             
	\end{align}
\end{proposition}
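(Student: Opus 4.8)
The plan is to proceed by structural induction on the CCS process $P$, proving both \eqref{soundness} and \eqref{completeness} simultaneously. The base case $P = 0$ is vacuous, since $\enc{0}$ has no events and $0$ has no transitions. For the inductive step I would treat one process constructor at a time, in each case relating a transition $P \redl{\alpha} Q$ (or a minimal event of $\enc{P}$) to the corresponding data in the subprocesses via the inductive hypothesis, and then checking that the encoding clause for that constructor sends $Q$ to $\enc{P} \setminus \set{e}$.

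The prefix case $P = \alpha.P'$ is the place where minimality is genuinely used: the only transition is $\alpha.P' \redl{\alpha} P'$, and by the definition of the prefix operation $\enc{\alpha.P'} = \alpha.\enc{P'} = \mem{e \cup E', C'', \labl''}$ where the fresh event $e$ is below every other event in every configuration, hence $\set{e}$ is a configuration and $e$ is the unique minimal event; one checks directly that removing $e$, i.e. $\enc{\alpha.P'} \setminus \set{e}$, yields exactly $\enc{P'} = \enc{Q}$. For the sum $P = P_1 + P_2$, a transition of $P$ is a transition of one summand, and by the coproduct definition the events of $\enc{P_1 + P_2}$ split into a copy of $E_1$ and a copy of $E_2$ with no shared nonempty configuration; a minimal event lies in one side, and $\enc{P}\setminus\set{e}$ collapses to $\enc{P_i}\setminus\set{e_i}$, which is $\enc{Q}$ by induction — here one must also observe that firing one summand discards the other, which matches the fact that after removing a minimal event of the $i$-th component only the $i$-th component's configurations survive. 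For parallel composition $P = P_1 \mid P_2$, a transition is either an independent move of one component (handled by the $\times$ then relabelling then restriction chain, using that the projections are morphisms and that morphisms reflect causality, so a minimal event of the product projects to a minimal event of a factor or to $\star$) or a synchronisation $(a,\out a)$; in the latter case the minimal event of $\enc{P}$ has a projection to a minimal $a$-event of $\enc{P_1}$ and to a minimal $\out a$-event of $\enc{P_2}$, and one applies the inductive hypothesis to both factors. Restriction $P = \nu a.P'$ is the easiest: transitions of $\nu a.P'$ are transitions of $P'$ whose label does not mention $a$, and $\restr E_a$ simply deletes the events carrying $a$ (together with every configuration containing them), so minimality and the $\setminus\set{e}$ identity are inherited from $P'$.

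The main obstacle is the parallel-composition case, and specifically keeping the bookkeeping straight across the three-stage definition $\conf_1 \| \conf_2 = \big((\conf_1 \times \conf_2) \circ \labl\big) \restr E$: one must argue that an event $e$ is minimal in $\enc{P_1 \mid P_2}$ exactly when its image in $E_1^\star \times E_2^\star$ has each non-$\star$ projection minimal in the corresponding factor (using that $\pi_1,\pi_2$ are morphisms and morphisms reflect the causal order from \autoref{def:causality}), that no $0$-labelled event intervenes below it, and that the operation $\setminus\set{e}$ commutes appropriately with product, relabelling and restriction so that $\big((\enc{P_1}\setminus\set{e_1}) \times \enc{P_2}\big)$ (or the symmetric, or the synchronised variant) relabelled and restricted is exactly $\enc{P_1 \mid P_2}\setminus\set{e}$. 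I would isolate this as a small lemma about how $\setminus x$ interacts with $\times$, $\circ\labl$ and $\restr E$, prove it once, and then the soundness and completeness directions for $\mid$ follow by pairing it with the inductive hypotheses on $P_1$ and $P_2$. Everything else is a routine unfolding of \autoref{cat-op-def}.
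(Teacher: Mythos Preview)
Your proposal is correct and follows essentially the same inductive strategy as the paper, which merely records that soundness is by induction on the derivation $P\redl{\alpha}Q$ and completeness by structural induction on $\encc{P}$. Your choice to run a single structural induction on $P$ for both directions is a harmless repackaging---since the CCS transition rules are syntax-directed and the encoding is compositional, the three inductions (on $P$, on the derivation, on $\encc{P}$) line up case for case; your write-up is in fact considerably more detailed than the paper's one-line sketch, and your isolation of the commutation of $\setminus\{e\}$ with product, relabelling and restriction as a separate lemma is a sensible way to organise the parallel-composition case.
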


\begin{proof}
	We show this by induction on the derivation $P\redl{\alpha}Q$ for \eqref{soundness} and by structural induction on $\encc{P}$ for \eqref{completeness}.
\end{proof}

We cannot define a notion of context for configuration structures in general, as it is not clear what a configuration structure with a hole would be.
However, if a configuration structure $\conf$ has an operational meaning, i.e.\ if $\exists P$ a CCS process such that $\conf=\encc{P}$, we can use a CCS context \(C[\cdot]\) that we instantiate with $P$.

When reasoning on contexts in CCS, it is common to distinguish between the part of a transition fired by the context alone and the part fired by the process.
In the operational setting, one can easily decompose the term $C[P]$ thanks to the rules of the LTS.
We need a similar reasoning for the term $\encc{C[P]}$, hence we attach to the context $C[\cdot]$ and process $P$ a projection morphism \(\pi_{C, P}:\encc{C[P]}\to\encc{P}\) that can retrieve the parts of a configuration in $\encc{C[P]}$ that belong to $\encc{P}$\footnote{The formal definitions and the missing proofs can be found in \autoref{appendix}.}.

Morphisms do not preserve causality in general.
In the case of a product we can show that all causalities are due to one of the two configuration structures.
\begin{proposition}
	\label{prop:cause_projection}
	Let $x\in\conf_1\times\conf_2$. Then $e<_x e'\iff$ either $\pi_1(e)<_{\pi_1(x)} \pi_1(e')$ or $\pi_2(e)<_{\pi_2(x)} \pi_2(e')$.
\end{proposition}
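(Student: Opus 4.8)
The plan is to treat the two implications separately, as they are of rather different natures.

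The implication from right to left is immediate. Both $\pi_1$ and $\pi_2$ are morphisms $\conf_1\times\conf_2\to\conf_i$, and morphisms reflect causality (the observation following \autoref{def:causality}); hence from $\pi_i(e)<_{\pi_i(x)}\pi_i(e')$ for some $i\in\{1,2\}$ one obtains $e\leq_x e'$, and since $\pi_i(e)\neq\pi_i(e')$ forces $e\neq e'$, this yields $e<_x e'$.

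For the converse I would argue contrapositively. We may assume $e\neq e'$, since otherwise there is nothing to prove. Assuming $\pi_1(e)\not<_{\pi_1(x)}\pi_1(e')$ and $\pi_2(e)\not<_{\pi_2(x)}\pi_2(e')$, I would build a configuration $z\subseteq x$ with $e'\in z$ and $e\notin z$, which contradicts $e<_x e'$. For each $i$ for which $\pi_i(e')$ is a genuine event, let $y_i$ be the downward closure $\{\,g\in\pi_i(x)\mid g\leq_{\pi_i(x)}\pi_i(e')\,\}$, i.e.\ the least configuration of $\conf_i$ containing $\pi_i(e')$ inside $\pi_i(x)$; it is a configuration by \eqref{stability} together with \eqref{def-finitness} and finite completeness. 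Using $e\neq e'$ and the local injectivity of $\pi_i$ on $x$, the hypothesis gives $\pi_i(e)\notin y_i$ whenever $\pi_i(e)$ is itself a genuine event; when $\pi_i(e')$ is the added point $\star$, take $y_i=\pi_i(x)$ instead. I would then set
\[
z:=\{\, f\in x \mid \pi_1(f)\in y_1\cup\{\star\}\ \text{ and }\ \pi_2(f)\in y_2\cup\{\star\}\,\},
\]
so that $e'\in z$ and $e\notin z$ hold by construction.

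The hard part will be checking that $z$ is indeed a configuration of $\conf_1\times\conf_2$: one must show $\pi_1(z)\in C_1$, $\pi_2(z)\in C_2$, that $z$ satisfies \eqref{def-finitness} and \eqref{def-coincidence-freenes}, and that $\pi_1$ and $\pi_2$ remain morphisms. The inclusions $\pi_i(z)\subseteq y_i$ are trivial, but the reverse inclusions --- without which $\pi_i(z)$ need not even be a configuration --- amount to showing that no event of $y_i$ is blocked by the constraint coming from the other coordinate, and this is exactly where stability of $\conf_1$, of $\conf_2$ and of the product has to be played against the defining clauses of the product in \autoref{cat-op-def}; the bookkeeping for events that live in only one of the two components (the $\star$ cases) adds a layer of case analysis throughout. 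An alternative I would consider is an induction on $\card(x)$: remove from $x$ a maximal element $g\notin\{e,e'\}$ --- such a $g$ exists unless $e'$ is the unique maximal element of $x$ --- and note that $\pi_i(g)$ remains maximal in $\pi_i(x)$, so the inductive hypothesis applied to $x\setminus\{g\}$ transfers; the residual case where $e'$ is the unique maximal element of $x$ is the delicate one, and there too it is stability that does the work.
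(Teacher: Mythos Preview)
The paper does not actually prove \autoref{prop:cause_projection}: it is stated as a fact about Winskel's product and then used in the proof of \autoref{prop:HHPB_congr}. So there is nothing to compare your argument against; what follows is an assessment of your sketch on its own.

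Your right-to-left implication is fine: it is exactly the ``morphisms reflect causality'' remark after \autoref{def:causality}, together with local injectivity to get strictness.

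For the left-to-right implication, your claim that ``$e'\in z$ and $e\notin z$ hold by construction'' is not correct in one of the $\star$ cases you flag later. Take $e=(e_1,\star)$ and $e'=(\star,e_2')$ with $e_1\in E_1$, $e_2'\in E_2$. Then $\pi_1(e')=\star$, so by your recipe $y_1=\pi_1(x)$, while $y_2$ is the down-closure of $e_2'$. Now $\pi_1(e)=e_1\in\pi_1(x)=y_1$ and $\pi_2(e)=\star\in y_2\cup\{\star\}$, hence $e\in z$. Concretely, with $\conf_1=\{\emptyset,\{a\}\}$, $\conf_2=\{\emptyset,\{b\}\}$ and $x=\{(a,\star),(\star,b)\}$, your $z$ equals $x$ itself, whereas the witness you need is $\{(\star,b)\}$. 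So the choice $y_i=\pi_i(x)$ when $\pi_i(e')=\star$ is too coarse: in that situation the whole burden of excluding $e$ falls on the other coordinate, and your definition does not force $\pi_{3-i}(e)$ to be a genuine event lying outside $y_{3-i}$.

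This is fixable, but not by tweaking the cylinder: when $e$ and $e'$ live in disjoint components you must shrink the $\pi_1$-side below $\pi_1(x)$ to drop $e_1$, and then argue separately that the resulting $z$ still has $\pi_1(z)\in C_1$. Your alternative induction on $\card(x)$ is a cleaner route here: remove a maximal $g\neq e,e'$ from $x$, observe that maximality is preserved under each $\pi_i$ (so $\pi_i(x\setminus\{g\})$ is still a configuration), and carry the hypothesis down; the base case where $e'$ is the unique maximal event of $x$ is then handled directly using coincidence freeness of $x$ to produce a sub-configuration of $x$ separating $e$ from $e'$. I would pursue that line rather than repair the cylinder construction.
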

Without much difficulty the result can be extended to say that in $\encc{C[P]}$, causality appears due to either the causality in $C[\cdot]$ or the causality in \(P\):
a context can add but cannot remove causality in the process~\cite{Cristescu2013}.
\subsection{Encoding RCCS}

A RCCS term corresponds to a configuration in the configuration structure of its origin.
We can use the past execution, that is the memory of $R$ to point to a configuration but it is not discriminatory enough.
Consider the process $\emptymem\rhd a.0+a.b.0\redl{a}R$ whose configuration structure is \(\conf_3\) in \autoref{ex_unif}.
To determine which of the configurations labelled $a$ correspond to $R$ we have to consider the future of $R$ as well.

Hence we choose a configuration that respects the past and the future of $R$, but this is still not enough.
Consider the process $a.b.0 + a.b.0$ whose configuration is \(\conf_2\) in \autoref{ex_unif}.
For the trace $\emptymem \rhd a.b+a.b \redl{\alpha} b$ there is no way to choose between the two configurations labelled $a$.
From now on, we consider only RCCS processes for which the underlying CCS process has the property that $\collapse(P)=P$, where $\collapse$ is defined below.

\begin{definition}[Collapse]
	\label{def:collapse}
	\begin{align*}
		\collapse(\alpha.P+\alpha.Q)= & \alpha. \collapse(P),\text{ if }\collapse(P) = \collapse(Q) &   
		\collapse(\alpha.P)=&\alpha.\collapse(P)\\
		\collapse(\alpha.P+\beta.Q)=  & \alpha.\collapse(P)+\beta.\collapse(Q)                      &   
		\collapse((a)P)=&(a)\collapse(P)\\
		\collapse(\alpha.P|\alpha.Q)= & \alpha. \collapse(P),\text{ if }\collapse(P) = \collapse(Q) &   
		\collapse(0) = 0\\
		\collapse(P|Q)=&\collapse(P)|\collapse(Q)
	\end{align*}
\end{definition}

Hence each process points to a unique configuration, enabling us to encode the past behaviour without difficulty.
Thus we define an \enquote{address} function that, given the configuration structure of the process's origin and a trace to the process we want to encode, returns the configuration corresponding to the current state.

\begin{definition}[Encoding RCCS processes in configuration structures]
	\label{def:encod_rccs}
	Given \(R\) a RCCS process, its encoding \(\encr{R}\) is defined as the couple \((\encc{\orig{R}}, \funaddress{\encc{\orig{R}}}{\emptyset}{\orig{R} \red^{\star} R})\), where
	\begin{align*}
		\funaddress{\encc{\orig{R}}}{x}{R_1\redl{\alpha}R_2\red^{\star} R_3} & =\funaddress{\encc{\orig{R}}}{x\cup\set{e}}{R_2\red^{\star} R_3)} &   & \text{if }         
		\begin{dcases*}
		x\cup\set{e}\in \encc{\orig{R}}\\
		\text{and}\\
		\enc{\erase(R_2)}\subseteq \big(\encc{\orig{R}}\setminus (x\cup\set{e})\big)
		\end{dcases*}\\
		\funaddress{\encc{\orig{R}}}{x}{R_2\red^{\star}R_3)}                 & = x                                                               &   & \text{if } R_2=R_3 
	\end{align*}
\end{definition}

Let us show that the encoding is correct, and in particular that the function $\address$ is well defined.
\begin{proposition}[Soundness of the RCCS encoding]
	\label{prop-soundness-rccs}
	Let $R$ be a process, then \(\exists !x \in \encc{\orig{R}}\) such that $\funaddress{\encc{\orig{R}}}{\emptyset}{\orig{R}\red^{\star} R}=x$.
\end{proposition}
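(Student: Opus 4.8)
The plan is to show two things: that the address function terminates and returns a configuration of $\encc{\orig{R}}$ (existence), and that at each recursive call the event $e$ added to the current configuration $x$ is uniquely determined by the side conditions (uniqueness), so that the whole recursion is deterministic. I would proceed by induction on the length of the trace $\orig{R} \red^\star R$ witnessing the coherence of $R$; recall from the definition of coherence that such a trace exists and that $\orig{R}$ is unique up to structural congruence. The base case is the trace of length $0$, where $R = \orig{R}$ (up to $\congru$), the recursion returns $\emptyset$ immediately, and $\emptyset \in \encc{\orig{R}}$ by \eqref{def-finitness} applied vacuously (or simply because $\encc{P}$ always contains the empty configuration).

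For the inductive step, consider a trace $\orig{R} \redl{\alpha} R' \red^\star R$. First I would argue that the first recursive call is well posed: since $\orig{R} \redl{\alpha} R'$, erasing memories gives $\erase(\orig{R}) = \orig{R} \redl{\alpha} \erase(R')$ in CCS, so by the Soundness part of \autoref{prop:encode_ccs} there is a \emph{minimal} event $e \in \encc{\orig{R}}$ with $\labl(e) = \alpha$ and $\encc{\erase(R')} = \encc{\orig{R}} \setminus \set{e}$. This $e$ satisfies both side conditions with $x = \emptyset$: $\set{e} \in \encc{\orig{R}}$ because $e$ is minimal, and $\encc{\erase(R')} \subseteq \encc{\orig{R}} \setminus \set{e}$ holds with equality. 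So the recursion makes progress; the trace $R' \red^\star R$ is strictly shorter, and I would like to apply the induction hypothesis to it. The subtlety is that the induction hypothesis is stated for a process relative to the structure of \emph{its own} origin, whereas here we keep recursing inside $\encc{\orig{R}}$ with a nonempty base configuration $\set{e}$; so what I actually need is a slightly more general statement, proved by the same induction: for every configuration structure $\conf$, every $x \in \conf$, and every trace $\fw^\star$ from a process whose erasure encodes to $\conf \setminus x$, the call $\funaddress{\conf}{x}{\cdot}$ terminates and returns a unique configuration of $\conf$ extending $x$. This generalized invariant is preserved: after the first step we are in state $(\encc{\orig{R}}, \set{e})$ and $\encc{\erase(R')} = \encc{\orig{R}} \setminus \set{e}$, which is exactly the hypothesis of the generalized claim for the shorter trace.

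The heart of the argument — and the main obstacle — is \emph{uniqueness} of the event $e$ chosen at each step, because a priori Soundness only guarantees existence of some minimal $e$ with the right label, and in structures like $\conf_2$ of \autoref{ex_unif} there could be two candidates. This is precisely where the $\collapse(P) = P$ hypothesis enters: I would show that when the underlying CCS process is collapsed, the configuration structure $\encc{\orig{R}}$ has no two distinct minimal events $e, e'$ with $\labl(e) = \labl(e')$ such that $\encc{\orig{R}} \setminus \set{e} = \encc{\orig{R}} \setminus \set{e'}$ — and more generally that the label together with the causal order of the events fired so far pins down the configuration uniquely, which is exactly the informal justification given before \autoref{def:collapse} ("we need to uniquely identify open configurations using only the label and the order of the events"), combined with the restriction on auto-concurrency from \autoref{rem-concur}. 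Concretely, the two offending patterns $\alpha.P + \alpha.Q$ with $\encc{P} \iso \encc{Q}$ and $\alpha.P \mid \alpha.Q$ with $\encc{P} \iso \encc{Q}$ are exactly the ones eliminated by $\collapse$, so after collapsing no such ambiguity survives, and the recursion is forced at every step. Once uniqueness of each $e$ is in hand, existence and well-definedness of $\funaddress{\encc{\orig{R}}}{\emptyset}{\orig{R}\red^\star R}$ follow, and by Soundness of \autoref{prop:encode_ccs} the result is genuinely a configuration of $\encc{\orig{R}}$ rather than an arbitrary set of events.
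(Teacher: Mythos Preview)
Your proposal is correct and uses the same ingredients as the paper: induction on the (forward) trace, \autoref{prop:encode_ccs} for existence of the event at each step, and the collapsing hypothesis for uniqueness.

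The only noteworthy difference is the direction in which you decompose the trace. You peel off the \emph{first} transition $\orig{R}\redl{\alpha}R'\red^\star R$, which forces you to recognise---and you do---that the induction hypothesis no longer applies verbatim, since the recursion continues inside $\encc{\orig{R}}$ with a nonempty base; you then repair this by strengthening the invariant to an arbitrary pair $(\conf,x)$ with $\encc{\erase(R')}=\conf\setminus x$. The paper instead peels off the \emph{last} transition: assuming the result for $\orig{R}\red^\star R$ with the invariant $\encc{\erase(R)}=\encc{\orig{R}}\setminus x_n$, it shows directly that one more step $R\redl{\alpha}R_{n+1}$ yields a unique $e$ with $x_n\cup\{e\}\in\encc{\orig{R}}$ and $\encc{\erase(R_{n+1})}=\encc{\orig{R}}\setminus(x_n\cup\{e\})$. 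This last-step decomposition keeps the origin fixed throughout and avoids the need to generalise the statement, so it is a bit slicker; your version is equally valid but pays the price of stating and carrying the auxiliary lemma. A small aside: the uniqueness argument in the paper invokes only the collapsing hypothesis (\autoref{def:collapse}), not \autoref{rem-concur}; the latter is used elsewhere, for the contextual characterisation.
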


The proof, presented in \autoref{appendix}, proceeds by induction on the trace, uses \autoref{prop:encode_ccs} and the collapsing hypothesis (\autoref{def:collapse}).

Let us now define a transition relation on configuration structures, useful in showing the operational correspondence between terms of RCCS and their encoding.

\begin{definition}[Transition in configuration structures]
	Define $(\encc{P}, x)\redl{\labl(e)}(\encc{P}, x\cup\set{e})$ for $x\cup\set{e}\in\encc{P}$.
\end{definition}

\begin{lemma}[Operational correspondence] 	\begin{enumerate}
		\item if $R\fbwlts{i}{\alpha}S$ then $\encr{R}\fbl{\alpha}\encr{S}$;
		\item let $\encr{R}=(\conf,x)$; if $(\conf,x)\redl{\labl(e)}(\conf, x\cup\set{e})$ then $\exists S$, such that for some $i\in\ids$ fresh, $R\fbwlts{i}{\alpha}S$ and $\encr{S}=(\conf,x\cup\set{e})$.
	\end{enumerate}
\end{lemma}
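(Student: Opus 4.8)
The plan is to prove the two directions separately, both by induction on the length of the trace $\orig{R}\red^\star R$ that defines $\encr{R}$ via the address function, using \autoref{prop:encode_ccs} as the engine for the single-step case and \autoref{prop-soundness-rccs} to know that the configuration picked out by $\address$ is well-defined at each stage. The key observation tying the two semantics together is that, by the definition of $\address$ (\autoref{def:encod_rccs}), if $\encr{R}=(\encc{\orig{R}},x)$ then $\enc{\erase(R)}$ is (isomorphic to) $\encc{\orig{R}}\setminus x$; I would isolate this as a preliminary remark, since it is exactly what lets us transport a transition of $\erase(R)$ — available through \autoref{prop:encode_ccs} — into a move on the pair $(\encc{\orig{R}},x)$, and conversely.

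For item~1, I would first reduce the two cases of $\fbwlts{i}{\alpha}$ — forward and backward — to a single forward claim: a backward step $R\bwlts{i}{\alpha}S$ is, by the LTS rules and the definition of coherence, exactly the reverse of a forward step $S\redl{i:\alpha}R$, so it suffices to show that $R\redl{i:\alpha}S$ implies $\encr{R}\redl{\alpha}\encr{S}$, i.e.\ $x_S=x_R\cup\set{e}$ for some minimal $e$ of $\encc{\orig{R}}\setminus x_R$ with $\labl(e)=\alpha$. Now $R\redl{i:\alpha}S$ gives $\erase(R)\redl{\alpha}\erase(S)$ in CCS (erasing memories, using the LTS rules of \autoref{ltsrules} and the $\erase$ equations), so \eqref{soundness} of \autoref{prop:encode_ccs} supplies a minimal $e\in\enc{\erase(R)}=\encc{\orig{R}}\setminus x_R$ with $\labl(e)=\alpha$ and $\enc{\erase(S)}=\enc{\erase(R)}\setminus\set{e}=(\encc{\orig{R}}\setminus x_R)\setminus\set{e}=\encc{\orig{R}}\setminus(x_R\cup\set{e})$. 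By the defining clause of $\address$ this forces $x_S=x_R\cup\set{e}$ (uniqueness is \autoref{prop-soundness-rccs}), which is precisely $\encr{R}\redl{\alpha}\encr{S}$; the backward instance reads the same equation right-to-left.

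For item~2, given $\encr{R}=(\conf,x)$ and $(\conf,x)\redl{\labl(e)}(\conf,x\cup\set{e})$, the event $e$ is minimal in $\conf\setminus x\cong\enc{\erase(R)}$, so \eqref{completeness} of \autoref{prop:encode_ccs} yields a CCS process $Q$ with $\erase(R)\redl{\labl(e)}Q$ and $\enc{Q}=\enc{\erase(R)}\setminus\set{e}$. I then lift this CCS transition to an RCCS transition on $R$: by the LTS rules one can fire the corresponding prefix in $R$ with a fresh identifier $i\in\ids$ (freshness is exactly the side condition $i\notin\ids(m)$ of act.\ and $i\notin\ids(S)$ of par.), obtaining $R\fbwlts{i}{\alpha}S$ with $\erase(S)=Q$; the subtlety is handling synchronisations ($\alpha=\tau$, where one must fire both partners with the same $i$) and restriction, which I would dispatch by a small case analysis mirroring the induction in the proof of \autoref{prop:encode_ccs}. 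Finally $\encr{S}=(\conf,x\cup\set{e})$ because $\orig{S}=\orig{R}$, the trace defining $\encr{S}$ is the trace defining $\encr{R}$ extended by this one step, and the $\address$ clause applied to that extra step moves $x$ to $x\cup\set{e}$ — using $\enc{\erase(S)}=\enc{Q}=\conf\setminus(x\cup\set{e})$ to check the side condition.

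The main obstacle I anticipate is the lifting step in item~2: going from a transition of the memory-free process $\erase(R)$ back up to a transition of the monitored process $R$ with a correctly fresh identifier, and in particular making the synchronisation case precise, since there the single CCS $\tau$-step decomposes into a matched pair of RCCS steps that must agree on $i$. Everything else is bookkeeping around the identity $\enc{\erase(R)}\cong\encc{\orig{R}}\setminus x_R$ and the two halves of \autoref{prop:encode_ccs}.
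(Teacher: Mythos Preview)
Your proposal is correct and matches the paper's argument: both hinge on the identity $\enc{\erase(R)}=\encc{\orig{R}}\setminus x$ together with the two halves of \autoref{prop:encode_ccs}, and then chase one step of the $\address$ recursion. Two minor simplifications relative to your plan: no induction on the trace is actually needed (for item~1 the paper just observes $\orig{R}=\orig{S}$ and unfolds one clause of $\address$ on the extended trace $\orig{R}\red^\star R\redl{\alpha}S$), and for the lifting step in item~2 that you flag as the main obstacle the paper simply cites the known strong bisimulation between an RCCS term and its erasure~\cite{Danos2004} rather than redoing the LTS case analysis.
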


\begin{proof}
	\begin{enumerate}
		\item As $R\fwlts{i}{\alpha}S$, $\orig{R}=\orig{S}$ and we are in the following situation:
		      \tikz[baseline={([yshift=-.8ex]current bounding box.center)}]{
		      	\node (OR) at (0.75, 0) {\(\orig{R} = \orig{S}\)};
		      	\node (R) at (0, 1.5) {\(R\)};
		      	\node (S) at (1.5, 1.5) {\(S\)};
		      	\draw [->, densely dotted] (OR) -- (R);
		      	\draw [->, densely dotted] (OR) .. controls (0.25, 1.25) .. ($(S)-(0.2, .15)$);
		      	\draw [->] (R) -- (S);}
		      
		      We have that $\encr{S}=(\encc{\orig{R}},x_s)$, where
		      \(x_s= \funaddress{\encc{\orig{R}}}{\emptyset}{\orig{R}\red^\star S}=
		      \funaddress{\encc{\orig{R}}}{\emptyset}{\orig{R}\red^\star R\redl{\alpha} S}= x_R\cup\set{e}\).
		      As $\encr{R}=(\encc{\orig{R}},x_R)$ it follows that $(\encc{\orig{R}},x_R)\redl{\alpha}(\encc{\orig{R}},x_S)$.
		      The proof for the backward direction is similar except that it uses the trace up to $R$.
		\item From $(\conf,x)\redl{\labl(e)}(\conf, x\cup\set{e})$ we have that $x\cup\set{e}\in\conf$. Then $\set{e}\in\conf\setminus x$. From $\encr{R}=(\conf,x)$ we have that $\conf\setminus x=\encc{\erase(R)}$, hence $\set{e}\in\encc{\erase(R)}$. We use \autoref{prop:encode_ccs} and obtain that $\exists P$ such that $\erase(R)\redl{\labl(e)}P$.
		      Then due to the strong bisimulation between a RCCS term and its corresponding CCS term~\cite{Danos2004}, we have that, for some $i$, $R\fbwlts{i}{\alpha}S$.
		      That $\encr{S}=(\conf,x\cup\set{e})$ follows from a similar argument as in 1.\ above.\qedhere
	\end{enumerate}
\end{proof}

\section{Definition of Bisimulations}
\label{def-bisim-cs}

In this section we adapt to configuration structures the definitions of barb and strong back-and-forth barbed bissimulation on RCCS terms (\autoref{def:barb} and \autoref{def:sbfc_rccs}).
We define hereditary history preserving bisimulation, show that they \enquote{translate} the sister notion on RCCS terms (\autoref{soundness-bissim}), and use two family of relations, denoted $F_i$ and $B_i$, to inductively approximate the bisimulation (\autoref{Fn-Bn-and-h}).

\begin{definition}	\label{bisim-cs}
	A \emph{strong back-and-forth barbed bisimulation on labelled configuration structures} is a symmetric relation $\rel\subseteq C_1\times C_2$ such that \((\emptyset,\emptyset)\in\rel\), and if \((x_1,x_2)\in\rel\), then
	\begin{align*}
		x_1\revredl{e_1}x_1' \implies \exists x_2'\in C_2\text{ \st } x_2\revredl{e_2}x_2'\text{, with }\labl_1(e_1)=\labl_2(e_2)=\tau\text{ and } (x_1',x_2')\in\rel; \tag{Back}                                  \\
		x_1\redl{e_1}x_1' \implies \exists x_2'\in C_2 \text{ \st }x_2\redl{e_2}x_2'\text{, with }\labl_1(e_1)=\labl_2(e_2)=\tau \text{ and }(x_1',x_2')\in\rel; \tag{Forth}                                       \\
		\text{if }\exists e_1\in E_1\text{ \st }\labl_1(e_1)\neq\tau\text{ and }x_1\redl{e_1}x_1'\text{ then }\exists x_2'\in C_2\text{ \st }x_2\redl{e_2}x_2'\text{, with }\labl_1(e_1)=\labl_2(e_2).\tag{Barbed} 
	\end{align*}
	Let $\conf_1\sbbisim\conf_2$ if and only if there exists a strong back-and-forth barbed bisimulation between $\conf_1$ and $\conf_2$.
	
	Denote $\scbisim$ a symmetric relation on terms that have an operational meaning such that if $\enc{P_1}\scbisim\enc{P_2}$ then $\forall C$, $\enc{C[P_1]}\sbbisim\enc{C[P_2]}$.
\end{definition}

Let us now show that the relation in \autoref{bisim-cs} is the relation induced by the barbed congruence on processes.
\begin{lemma}
	\label{soundness-bissim}
	$R\sbfbc S\implies\encc{\erase(\orig{R})}\sbfbc\encc{\erase(\orig{S})}$ and $\encc{P}\sbfbc\encc{Q}\implies P\sbfbc Q$.
\end{lemma}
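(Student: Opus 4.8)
The plan is to prove the two implications separately, in each case by transporting a witnessing bisimulation across the encoding $\encc{\cdot}$, using the operational correspondence lemma (together with \autoref{prop:encode_ccs}) as the bridge between the operational and the denotational worlds. Note that the relation written $\sbfbc$ on configuration structures is the contextual relation $\scbisim$ of \autoref{bisim-cs}, and that $\erase(\orig R)$ is simply the CCS process $\orig R$, so the two halves read $R\sbfbc S\implies\encc{\orig R}\scbisim\encc{\orig S}$ and $\encc P\scbisim\encc Q\implies P\sbfbc Q$.

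\textbf{From RCCS to configuration structures.} By \autoref{lem-orig-sbfbc} it suffices to treat CCS processes $P=\orig R$, $Q=\orig S$ with $P\sbfbc Q$ as a congruence, and to show $\encc P\scbisim\encc Q$, i.e.\ $\encc{C[P]}\sbbisim\encc{C[Q]}$ for every context $C$ (all processes being in the class of \autoref{def:collapse}, as assumed throughout). Fix $C$. The congruence hypothesis yields a strong back-and-forth barbed bisimulation $\rel$ on coherent processes relating $\emptyset\rhd C[P]$ and $\emptyset\rhd C[Q]$. I would take
\[
  \rel' \;=\; \{\,(x,y)\mid \exists R',S'.\ (R',S')\in\rel,\ \encr{R'}=(\encc{C[P]},x),\ \encr{S'}=(\encc{C[Q]},y)\,\}
\]
and verify it is a strong back-and-forth barbed bisimulation between $\encc{C[P]}$ and $\encc{C[Q]}$: it is symmetric because $\rel$ is, and $(\emptyset,\emptyset)\in\rel'$ is witnessed by $(\emptyset\rhd C[P],\emptyset\rhd C[Q])\in\rel$, whose encodings have the empty configuration as address. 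For the (Forth) clause, a move $(\encc{C[P]},x)\redl{e_1}(\encc{C[P]},x')$ with $\labl(e_1)=\tau$ is matched, by the operational correspondence, by a \emph{forward} $\tau$-transition $R'\fwlts{i}{\tau}R''$ of $R'$ (necessarily forward, since the configuration grows); the (Forth) clause for $\rel$ gives $S'\fwlts{i}{\tau}S''$ with $(R'',S'')\in\rel$, and pushing this transition back through the operational correspondence produces $(\encc{C[Q]},y)\redl{e_2}(\encc{C[Q]},y')$ with $\labl(e_2)=\tau$ and $(x',y')\in\rel'$. The (Back) clause is symmetric, using that the backward transitions of a coherent term are in exact correspondence (via part~1 of the operational correspondence) with the removals of maximal events of its configuration. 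For the (Barbed) clause, a visible forward move $x\redl{e_1}x'$ with $\labl(e_1)=a\neq\tau$ gives $R'\downarrow_a$, whence $S'\downarrow_a$ by the (Barbed) clause for $\rel$, which the operational correspondence turns into a visible forward move out of $y$ labelled $a$. Thus $\encc{C[P]}\sbbisim\encc{C[Q]}$, and as $C$ is arbitrary, $\encc P\scbisim\encc Q$.

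\textbf{From configuration structures to RCCS.} Fixing a context $C$, it suffices to prove $\emptyset\rhd C[P]\sbfbb\emptyset\rhd C[Q]$, since the instance $C=[~]$ recovers the plain bisimulation and the family indexed by all $C$ is exactly $P\sbfbc Q$. From $\encc P\scbisim\encc Q$ one gets a strong back-and-forth barbed bisimulation $\mathcal S$ between $\encc{C[P]}$ and $\encc{C[Q]}$, and I would take $\mathcal S'$ to be the symmetric closure of $\{\,(R',S')\mid \encr{R'}=(\encc{C[P]},x),\ \encr{S'}=(\encc{C[Q]},y),\ (x,y)\in\mathcal S\,\}$. Then $(\emptyset\rhd C[P],\emptyset\rhd C[Q])\in\mathcal S'$, and the closure conditions are checked by the mirror-image diagram chase: an RCCS transition of $R'$ is sent by the operational correspondence to a transition of $\encc{C[P]}$, simulated in $\mathcal S$, and the resulting transition of $\encc{C[Q]}$ is lifted back to a transition of $S'$; a barb $R'\downarrow_a$ corresponds to a visible forward move of $\encc{C[P]}$ and is matched via the (Barbed) clause of $\mathcal S$.

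\textbf{Main obstacle.} The one genuinely delicate point is identifier bookkeeping: configuration structures forget the RCCS event identifiers, whereas the (Forth) and (Back) clauses of \autoref{def:sbfc_rccs} require the simulating transition to carry \emph{the same} identifier $i$. I would resolve this by (i) exploiting that $\encc{\cdot}$ is invariant under structural congruence, hence under uniform identifier renaming, so the identifier used when lifting a configuration-structure move back to RCCS can be chosen freely; and (ii) carrying along $\mathcal S'$ (and $\rel'$) the invariant that paired processes have matching memories --- equal identifier sets, and, for backward moves, identical top memory entries up to renaming --- which holds at the base pair and is preserved by lock-step forward and backward moves, so that an identifier fresh for, respectively removable from, one side is fresh for, respectively removable from, the other. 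Establishing that this invariant is preserved (which in turn rests on the backward form of the operational correspondence, obtained from part~1 of that lemma and the fact that a nonempty configuration always has a maximal event) is the crux of the argument; the remaining verifications are routine appeals to \autoref{prop:encode_ccs} and the operational correspondence lemma.
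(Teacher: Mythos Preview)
Your proposal is correct and takes essentially the same route as the paper: reduce to origins via \autoref{lem-orig-sbfbc}, quantify over contexts, and push the back-and-forth barbed bisimulation through $\encc{\cdot}$ using the operational correspondence. The paper compresses all of this into a five-line chain of implications (citing the soundness half of \autoref{prop:encode_ccs} where the RCCS operational-correspondence lemma is really the tool needed), whereas you spell out the transferred relations explicitly and flag the identifier-bookkeeping subtlety, which the paper silently absorbs by working modulo structural congruence (which already quotients out uniform identifier renaming).
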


\begin{proof}
	Both case are similar :
	\begin{align*}
		R\sbfbc S & \implies \orig{R} \sbfbc \orig{S}                                                     &   & \text{(By \autoref{lem-orig-sbfbc})}                              \\
		          & \implies \erase(\orig{R}) \sbfbc \erase(\orig{S})                                     &   & \text{(As \(\erase(\emptymem \vartriangleright P) = P\))}         \\
		          & \implies \forall C[\cdot], C[\erase(\orig{R})] \sbfbb C[\erase(\orig{S})]             &   & \text{(By \autoref{def:sbfc_rccs})}                               \\
		          & \implies \forall C[\cdot], \enc{C[\erase(\orig{R})]} \sbfbb \enc{C[\erase(\orig{S})]} &   & \text{(By the \ref{soundness} part of \autoref{prop:encode_ccs})} \\
		          & \implies \encc{\erase(\orig{R})}\sbfbc\encc{\erase(\orig{S})}                         &   & \text{(By \autoref{bisim-cs})} \hfill \qedhere                    
	\end{align*}
	
\end{proof}
\begin{definition} 	\label{sbisim_h-def}
	A \emph{hereditary history preserving bisimulation on labelled configuration structures} is a symmetric relation $\rel\subseteq C_1\times C_2\times\power(E_1\times E_2)$ such that $(\emptyset,\emptyset,\emptyset)\in\rel$ and if $(x_1,x_2,f)\in\rel$, then
	\begin{align*}
		f\text{ is a label and order preserving bijection between }x_1\text{ and }x_2\notag                                                                \\
		x_1\redl{e_1}x_1'\implies \exists x_2'\in C_2 \text{\st } x_2\redl{e_2}x_2' \text{ and } f=f'\restr x_1, (x_1',x_2',f')\in\rel \notag              \\
		x_1 \revredl{e_1} x_1'\implies \exists x_2'\in C_2 \text{\st } x_2 \revredl{e_2} x_2'\text{ and } and f'=f\restr x_2, (x_1',x_2',f')\in\rel \notag 
	\end{align*}
	We define \emph{bisimilarity}, denoted $\conf_1\sbisim\conf_2$, as the greatest hereditary history preserving bisimulation on labelled configuration structures.
\end{definition}

Note that $\conf_1\sbisim\conf_2$ is an abuse of notation as $\sbisim$ is a relation defined on $C_1\times C_2\times\power(E_1\times E_2)$. Due to the restrictions imposed on the configuration structures (see \autoref{rem-concur}) there is a unique mapping between events for the greatest hhp bisimulation.

We can give an inductive characterisation of HHPB by reasoning on the structures up to a level: we ignore the configurations that have greater cardinality then the considered level. Hhpb is then the relation obtained when we reach the top level. Hence we can detect, whenever two configuration structures are not hhp bisimilar, at which level the bisimulation does no longer hold. We do this with the aid of the two following functions.

\begin{definition}[\(\Forw{i}\), \(\Backw{i}\)]
	\label{ForwBackwDef}
	Given \(\conf_1\), \(\conf_2\) two configuration structures, we let \(k\) be the cardinal of the largest configuration of \(\conf_1\)\footnote{All the configurations we manipulate here are finite. In an infinite setting, this bound can be viewed as a way to define an \enquote{up to \(k\) steps bisimulation}.} and define, for all \(x_1 \in C_1\), \(x_2 \in C_2\) and \(f\) a label and order-preserving function:
	\begin{align*}
		(x_1, x_2, f) \in \Forw{i}                                                                                                                                       & \Leftrightarrow                
		\begin{dcases*}
		\card(x_1) = \card(x_2) = i, \text{ \(f\) any label and order-preserving function}                                                                               & \text{if \(i = k\)} \\
		\forall x'_1, \exists x'_2, x_1 \redl{e_1} x'_1, x_2 \redl{e_1} x'_2 \text{ and } f=f'\restr x_1\text{ \st} (x'_1, x'_2, f') \in \Forw{i+1}                      & \text{elsewhere}    
		\end{dcases*}\\
		(x_1, x_2, f) \in \Backw{i}                                                                                                                                      & \Leftrightarrow           
		\begin{dcases*}
		(x_1, x_2, f) \in \Forw{i}                                                                                                                                       & \text{if \(i = 0\)} \\
		\forall x'_1, \exists x'_2, x_1 \revredl{e_1} x'_1, x_2 \revredl{e_1} x'_2 \text{ and }f'=f\restr x_2\text{ \st}(x'_1, x'_2, f') \in \Forw{i-1} \cap \Backw{i-1} & \text{elsewhere}    
		\end{dcases*}
	\end{align*}
\end{definition}

The relation \(\Backw{i}\) is built on top of \(\Forw{i}\): it \enquote{tests for the backward steps} all the couples that \enquote{passed the forward test}.
It should be remarked that, with this definition, \(\Backw{i} \subseteq \Forw{i}\), but, at the price of slight modifications, one could define \(\Forw{i}\) on top of \(\Backw{i}\).

\begin{example}
	\label{example2}
	Consider \(\conf_3\) and \(\conf_4\) of \autoref{ex_unif}, the relations $F_n$ are enough to discriminate them:
	\begin{align*}
		F_2 = & \big(\set{e_3, e_3'}, \set{e_4, e_4'}\big); \big(\set{e_3, e_3'}, \set{e_4'',e_4'''}\big) &   
		F_1 = & \big(\set{e_3}, \set{e_4}\big); \big(\set{e_3}, \set{e_4''}\big)                          &   
		F_0 = &\emptyset
	\end{align*}
	This intuitively is due to the fact that forward transitions are enough to discriminate $a+a.b$ and $a.b+a.b$.
	However for comparing the processes $a~|~b$ and $a.b+b.a$ whose configurations are \(\conf_1\) and \(\conf_2\) of \autoref{ex_unif}, we need the backward moves as well:
	\begin{align*}
		F_2 = & \big(\set{e_1,e_1'},\set{e_2,e_2'}\big); \big(\set{e_1,e_1'};\set{e_2'',e_2'''}\big) &   
		F_1 = & \big(\set{e_1},\set{e_2}\big); \big(\set{e_1'};\set{e_2''}\big)                      &   
		F_0 = &\big(\emptyset,\emptyset\big)\\
		B_2=  & \emptyset                                                                            &   
		B_1=  & \big(\set{e_1},\set{e_2}\big); \big(\set{e_1'};\set{e_2''}\big)                      &   
		B_0=&F_0=\big(\emptyset,\emptyset\big)
	\end{align*}
\end{example}
The following proposition states that pairs of configurations are in a bisimulation relation if they have the same cardinality. It follows from the fact that any configuration is reachable from the empty set and that they have to mimic each other's step in the backward direction.
\begin{proposition}
	\label{prop:size_config}
	Let $\conf_1\sbisim\conf_2$ and $x_1\in\conf_1$, $x_2\in\conf_2$.
	If $\exists f$ such that $(x_1,x_2,f)\in \{\sbisim\}$ then $\card(x_1)=\card(x_2)$.
\end{proposition}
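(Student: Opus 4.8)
The plan is to prove $\card(x_1) = \card(x_2)$ by exploiting the backward clause of the bisimulation together with the fact that the empty configuration is the unique minimal one.

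First I would observe that since $(x_1, x_2, f) \in \{\sbisim\}$ and $\sbisim$ is an hereditary history preserving bisimulation, the map $f$ is in particular a bijection between $x_1$ and $x_2$, so $\card(x_1) = \card(f(x_1)) = \card(x_2)$ follows immediately. However, this is circular if we read \autoref{prop:size_config} as a lemma feeding into the well-definedness of $\sbisim$ as a relation on pairs; so instead I would argue more carefully using only the transition clauses, without presupposing that $f$ is a total bijection. The real content is: whenever a pair $(x_1, x_2, f)$ belongs to a back-and-forth bisimulation, repeated application of the \emph{Back} clause drives both sides down to the empty configuration in the same number of steps, and repeated application of the \emph{Forth} clause from $(\emptyset, \emptyset, \emptyset)$ shows any reachable configuration has a partner of equal size.

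Concretely, the key steps in order are: (1) Show that from any $x_1 \in C_1$ there is a finite descending chain $x_1 \revredl{e} \cdots \revredl{e'} \emptyset$ of length exactly $\card(x_1)$; this is because each backward step $x \revredl{e} x'$ satisfies $x = x' \cup \set{e}$ with $e \notin x'$, so it strictly decreases cardinality by one, and \eqref{def-finitness} together with \eqref{def-coincidence-freenes} guarantee that as long as $x \neq \emptyset$ some event can be removed (pick a maximal event of $x$ with respect to $\leq_x$; removing it yields a configuration by coincidence-freeness and stability). (2) Use the \emph{Back} clause of the bisimulation: given $(x_1, x_2, f) \in \rel$ where $\rel$ witnesses $\conf_1 \sbisim \conf_2$, each backward step of $x_1$ is matched by a backward step of $x_2$ landing again in $\rel$, and symmetrically (since $\rel$ is symmetric) each backward step of $x_2$ is matched by one of $x_1$. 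Hence by induction on $\card(x_1) + \card(x_2)$, if one of $x_1, x_2$ is nonempty so is the other, and they decrease together; therefore they reach $\emptyset$ simultaneously and $\card(x_1) = \card(x_2)$. (3) Conclude, noting that the base case $(\emptyset, \emptyset, \emptyset) \in \rel$ is part of the definition of the bisimulation, and that $\card(\emptyset) = \card(\emptyset)$.

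The main obstacle I anticipate is step (1) — namely verifying that a nonempty configuration always admits \emph{some} backward transition, i.e.\ that it can be written as $x' \cup \set{e}$ for a configuration $x'$. This requires extracting a $\leq_x$-maximal event $e \in x$ and checking that $x \setminus \set{e} \in C$, which uses coincidence-freeness to separate $e$ from the rest and stability/finite-completeness to reassemble $x \setminus \set{e}$ as a configuration; one must be slightly careful that $\leq_x$-maximality (not minimality) is the right notion here. A secondary subtlety is making the induction in step (2) genuinely symmetric: one runs the argument on the measure $\card(x_1) + \card(x_2)$ and uses that $\rel$ is symmetric so that a backward move available on the larger side (if sizes differed) would have to be matched on the other, eventually forcing a contradiction with one side already being $\emptyset$ while the other is not. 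Once these are in place the proposition follows without further calculation.
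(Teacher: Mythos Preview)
Your proposal is correct, and your elaborate backward-transition argument is exactly what the paper sketches in the sentence preceding the proposition (``any configuration is reachable from the empty set and that they have to mimic each other's step in the backward direction''). Your concern about circularity in the direct argument, however, is unfounded: \autoref{sbisim_h-def} explicitly requires, as part of the data of any triple $(x_1,x_2,f)$ in an HHPB, that $f$ be a label- and order-preserving \emph{bijection} between $x_1$ and $x_2$, so $\card(x_1)=\card(x_2)$ is immediate and the proposition is not used anywhere to justify that $\sbisim$ is well-defined as a relation on pairs (the remark after \autoref{sbisim_h-def} attributes that to the auto-concurrency restriction, not to this proposition). Your longer argument is therefore more than is strictly needed, though it has the virtue of applying to back-and-forth bisimulations that do not carry an explicit bijection (such as the barbed one in \autoref{bisim-cs}), which is presumably why the paper phrases the justification that way.
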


We are going to prove a fine lemma that will be handy to prove \autoref{main-thm}.
It implies that if for all \(n \leqslant k\) the maximal cardinal considered, \(\Forw{n} \cap \Backw{n} \neq \emptyset\), then \(\cup_{n \leqslant k} (\Forw{n} \cap \Backw{n})\) is a bisimulation.

\begin{lemma}
	\label{Fn-Bn-and-h}
	For all \(\conf_1\), \(\conf_2\), if \(\conf_1 \sbisim \conf_2\), then \(\forall x_1 \in \conf_1 (\exists x_2 \in \conf_2, \exists f, (x_1, x_2, f) \in \Forw{n} \cap \Backw{n}) \iff (\exists x_2 \in \conf_2, \exists f, (x_1, x_2, f) \in\sbisim )\).
\end{lemma}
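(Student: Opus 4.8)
The plan is to prove the two implications separately, and in each direction to build the required witness out of the other side. The key observation that organizes everything is the stratification built into \autoref{ForwBackwDef}: membership of a triple $(x_1,x_2,f)$ in $\Forw{n}$ certifies that one can complete $f$ to a full label- and order-preserving bijection on *some* maximal pair of configurations reached by forward moves, and membership in $\Backw{n}$ additionally certifies that the backward moves down to $\emptyset$ can be mimicked while staying inside $\Forw{} \cap \Backw{}$ at each lower level. So the two sides of the iff are really two packagings of "there is a hereditary, history-preserving way to relate $x_1$ to something in $\conf_2$," and \autoref{prop:size_config} guarantees that the relevant cardinalities match up with the index $n$ (so $n = \card(x_1)$ in both cases).

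For the direction ($\Leftarrow$): assume $(x_1, x_2, f) \in {\sbisim}$, i.e.\ $(x_1, x_2, f)$ lies in the greatest hhp bisimulation $\rel$. Let $n = \card(x_1) = \card(x_2)$ by \autoref{prop:size_config}. I would show $(x_1, x_2, f) \in \Forw{n}$ by downward induction on $k - n$: at the top level $n = k$ there is nothing to check beyond $f$ being label- and order-preserving, which holds since $f$ is a bijection witnessed by membership in $\rel$; for $n < k$, any forward step $x_1 \redl{e_1} x_1'$ is matched by $\rel$ with some $x_2 \redl{e_2} x_2'$ and an extension $f'$ with $(x_1', x_2', f') \in \rel$, and since $f'$ is order-preserving it must send $e_1$ to the unique event that $e_2$ adds (here the auto-concurrency restriction of \autoref{rem-concur} is what makes this "the" event and hence $\labl_1(e_1) = \labl_2(e_2)$ is forced with the indices aligned as in the definition), so the induction hypothesis applies to $(x_1', x_2', f')$. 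For $\Backw{n}$ I would argue similarly by induction on $n$: at $n = 0$ we have $\emptyset$ related to $\emptyset$ and $\Backw{0} = \Forw{0}$; for $n > 0$, each backward step $x_1 \revredl{e_1} x_1'$ is matched by $\rel$ (which is symmetric and closed under backward moves) with $x_2 \revredl{e_2} x_2'$ and the restricted bijection $f' = f \restr x_2$, and $(x_1', x_2', f') \in \rel$, so by the two inductions $(x_1', x_2', f') \in \Forw{n-1} \cap \Backw{n-1}$, which is exactly the clause required. Hence $(x_1, x_2, f) \in \Forw{n} \cap \Backw{n}$.

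For the direction ($\Rightarrow$): assume $(x_1, x_2, f) \in \Forw{n} \cap \Backw{n}$ with $n = \card(x_1)$. I would assemble a relation $\rel$ and check it is an hhp bisimulation containing $(x_1, x_2, f)$; the natural candidate, following the hint just before the lemma, is $\rel = \bigcup_{m \leqslant k}(\Forw{m} \cap \Backw{m})$, closed up under whatever is needed to make it symmetric. The triple $(\emptyset, \emptyset, \emptyset)$ is in $\Backw{0} = \Forw{0}$, so it is in $\rel$. Each triple in $\rel$ has a label- and order-preserving $f$ by construction, and it is a bijection because the $\Forw{}$-clause forces $\card(x_1) = \card(x_2)$ at every level and $f$ is injective and total on $x_1$ into $x_2$. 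For the forward transfer: a triple in $\Forw{m} \cap \Backw{m}$ with $m < k$ directly gives, for every forward step on the left, a matching forward step on the right landing in $\Forw{m+1}$; but I also need it to land in $\Backw{m+1}$, and this is the subtle point — $\Backw{m+1}$ demands that the resulting triple can also backtrack correctly. Here I would use that we *started* inside $\Backw{m}$: going one forward step up and the matching one backward step down returns to $(x_1, x_2, f) \in \Backw{m}$, and by running the $\Backw{}$ reasoning upward one establishes $(x_1', x_2', f') \in \Backw{m+1}$ as well. The backward transfer from $\Backw{m}$ is immediate from its definition. For the case $m = k$, there are no forward steps to match (configurations are maximal), and backward steps are handled because being in $\Backw{k}$ supplies matches landing in $\Forw{k-1} \cap \Backw{k-1} \subseteq \rel$. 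This shows $\rel$ is an hhp bisimulation, so $\rel \subseteq {\sbisim}$, and in particular $(x_1, x_2, f) \in {\sbisim}$, giving the desired $x_2$ and $f$.

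The main obstacle I anticipate is precisely the forward-transfer step in the ($\Rightarrow$) direction: reconciling the two stratifications, i.e.\ showing that a triple reached by a forward move from something in $\Forw{m} \cap \Backw{m}$ lies not merely in $\Forw{m+1}$ (which is by definition) but also in $\Backw{m+1}$. The definition of $\Backw{}$ only propagates information downward, so one has to argue that the backward-matching data available at level $m$ "lifts" to level $m+1$ — essentially that forward and backward matchings can be chosen coherently, which is where the uniqueness of the event mapping under the auto-concurrency and collapsing restrictions (\autoref{rem-concur}, \autoref{def:collapse}) does the real work by removing any choice. A secondary, more bookkeeping-level obstacle is making sure the index $n$ genuinely equals $\card(x_1)$ throughout, so that $\Forw{n}$ and the clause of $\sbisim$ are talking about configurations of the same size; this is handled cleanly by \autoref{prop:size_config} but must be invoked at each inductive step.
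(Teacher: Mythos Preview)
Your ($\Leftarrow$) direction is correct and actually a little stronger than what the paper proves: you show that the \emph{same} triple $(x_1,x_2,f)\in{\sbisim}$ already lies in $\Forw{n}\cap\Backw{n}$, via a clean downward induction for $\Forw{}$ followed by an upward one for $\Backw{}$. The paper instead runs a single induction on $\card(x_1)$ and, for this direction, argues by contraposition---stepping down through $\rel$, invoking the IH to land in $\Forw{k}\cap\Backw{k}$, and stepping back up through $\Forw{}$---which may produce a different witness $x_2$. Both work; yours is arguably tidier.

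Your ($\Rightarrow$) direction, however, has a real gap, and it is exactly the one you flag. You attempt to show directly that $\bigcup_m(\Forw{m}\cap\Backw{m})$ is an hhp bisimulation. For the forward clause you need: from $(x_1,x_2,f)\in\Forw{m}\cap\Backw{m}$ and $x_1\redl{e_1}x_1'$, the witness $(x_1',x_2',f')\in\Forw{m+1}$ supplied by $\Forw{m}$ also lies in $\Backw{m+1}$. Your sketch only verifies the \emph{one} backward step $x_1'\revredl{e_1}x_1$, but $\Backw{m+1}$ quantifies over \emph{all} backward steps from $x_1'$; for $x_1'\revredl{e}x_1''$ with $e\neq e_1$ you have no inductive leverage putting $(x_1'',x_2'',f'')$ into $\Forw{m}\cap\Backw{m}$. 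Invoking the auto-concurrency and collapse restrictions does not close this: those pin down \emph{which} event $f$ sends where, but they do not by themselves place the restricted triple in $\Backw{m}$.

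The structural reason the gap appears is that your ($\Rightarrow$) argument never uses the hypothesis $\conf_1\sbisim\conf_2$. The paper's proof uses it essentially. Writing $\rel$ for ${\sbisim}$, it proceeds by induction on $\card(x_1)$: given $(x_1,x_2,f)\in\Forw{k+1}\cap\Backw{k+1}$, it steps backward via $\Backw{k+1}$ to some $(x_1',x_2',f')\in\Forw{k}\cap\Backw{k}$, applies the IH to obtain $(x_1',x_2'',f'')\in\rel$, and then uses the forward closure of $\rel$ along $x_1'\redl{}x_1$ to get $(x_1,x_2''',f''')\in\rel$. The ``all backward steps'' obligation is thus discharged by $\rel$, not by $\Backw{}$. (That $\bigcup_m(\Forw{m}\cap\Backw{m})$ is a bisimulation is then a \emph{consequence} of the lemma, as the paper notes just before it, rather than the means of proving it.) If you want to salvage your overall plan, replace the direct bisimulation check by this zig-zag through $\rel$.
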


\begin{proof}
	Let us denote $\rel$ the relation $\sbisim$.
	One should first remark that \(\conf_1 \sbisim \conf_2\) implies that \(\forall x_1 \in \conf_1, \exists x_2 \in \conf_2\), and $\exists f$ such that \( (x_1, x_2, f) \in \rel\), as $(\emptyset,\emptyset,\emptyset)\in\rel$ and all configurations are reachable from the empty set.
	The reader should notice that the \(x_2 \in \conf_2\) and \(f\) on both sides of the \(\iff\) symbols may be different.
	
	We prove that statement by induction on the cardinal of \(x_1\).
	\paragraph{\(\card(x_1) = 0\)}
	\begin{itemize}
		\item[\(\Rightarrow\)]
		      \(x_2 \in \conf_2\) \st \((\emptyset, x_2, f) \in \rel\) follows by the definition of the bisimulation from \(x_2 = \emptyset\) and \(f = \emptyset\).
		\item[\(\Leftarrow\)]
		      By definition, \(\Forw{0} \cap \Backw{0} = \Forw{0}\).
		      Since there exists \(x_2 \in \conf_2\) such that \((\emptyset, x_2, f) \in \rel\), we know that any forward transition made by \(\emptyset\) can be simulated by a forward transition from \(x_2\), and that the elements obtained are in the relation \(\rel\).
		      By an iterated use of this notion, we can find \enquote{maximal} elements \(x_1^m \in \conf_1\) and \(x_2^m \in \conf_2\) (that is, elements of maximal cardinality, \(k\)) such that \((x_1^m, x_2^m, f^m) \in \rel\).
		      By \autoref{prop:size_config}, \(x_1^m\) and \(x_2^m\) have the same cardinality, and \((x_1^m, x_2^m, f^m) \in \Forw{k}\).
		      By just \enquote{reversing the trace}, we can go backward and stay in relation \(\Forw{i}\) until \(i = 0\), hence we found the \(x_2\) and \(f\) we were looking for.
	\end{itemize}
	\paragraph{\(\card(x_1) = k +1\)}
	As \(\card(x_1) > 0\), we know there exists \(x'_1\) such that \(x_1 \revredl{e_1} x'_1\).
	\begin{itemize}
		\item[\(\Rightarrow\)]
		      Let \(x_2\) and \(f\) such that \((x_1, x_2, f) \in \Forw{k+1} \cap \Backw{k+1}\).
		      We know that
		      \begin{align*}
		      	\forall x'_1, \exists x'_2\text{ and } f', x_1 \revredl{e_1} x'_1, x_2 \revredl{e_1} x'_2 \text{ and } (x'_1, x_2', f') \in \Backw{k} \tag{By Definition of \(\Backw{k} \)} \\
		      	\exists x''_2,f'', (x'_1, x_2'', f'') \in \rel \tag{By Induction Hypothesis}                                                                                                
		      \end{align*}
		      And as \(x'_1 \redl{e_1} x_1\), there exists \(x'''_2\) and \(f'''\) such that \((x_1, x'''_2, f''') \in \rel\).
		\item[\(\Leftarrow\)]
		      We prove it by contraposition: suppose that \(\exists x_2,f\) such that \((x_1, x_2, f) \in \rel\), we prove that \(\forall x_2\), \((x_1, x_2, f) \notin \Forw{k+1} \cap \Backw{k+1}\) leads to a contradiction.
		      
		      As \((x_1, x_2, f) \in \rel\), \(\exists x'_1, x'_2,f'\) such that \(x_1 \revredl{e_1} x'_1\), \(x_2 \revredl{e_1} x'_2\) and \((x'_1, x'_2, f') \in \rel\).
		      By induction hypothesis, \(\exists x''_2\) and \(\exists f''\) such that \((x'_1, x''_2, f'') \in \Forw{k} \cap \Backw{k}\).
		      As \(x'_1 \redl{e_1} x_1\), \(\exists x'''_2\) and \(\exists f'''\) such that \(x''_2 \redl{e_1} x'''_2\) and \((x_1, x'''_2, f''') \in \Forw{k+1}\), by definition of \(\Forw{k}\).
		      
		      So \((x_1, x'''_2, f''') \notin \Backw{k+1}\), but as \(x_1 \revredl{e_1} x'_1\) and \(x'''_2 \revredl{e_1} x''_2\), and as moreover \((x'_1, x''_2, f'') \in \Forw{k} \cap \Backw{k}\), we have that \((x_1, x'''_2, f''') \in \Backw{k+1}\).
		      
		      From this contradiction we know that we found the right element (\(x'''_2\)) that is in relation with \(x_1\) according to \(\Forw{k+1} \cap \Backw{k+1}\).
		      \qedhere
	\end{itemize}
\end{proof}

\section{Correspondence between HHPB and Strong Barbed Congruence}
\label{sec:res}

In this section we use the relations defined in \autoref{def-bisim-cs} to show that two processes are barbed congruent whenever their denotations are in the HHPB relation (\autoref{main-thm}).
One direction is straightforward (\autoref{prop:HHPB_congr}), whereas the other is more technical and, as in CCS~\cite{Milner1992}, follows by contradiction.
It uses the relations \(\Forw{i}\) and \(\Backw{i}\) (\autoref{ForwBackwDef}) to build contexts that discriminate processes that are not bisimilar.

\begin{remark}[On auto-concurrency and others limitations]
	\label{rem-concur}
	In the proofs that follow we need to uniquely identify configurations based on the labels and orders of the \enquote{open} (i.e.\ non synchronized) events.
	This is not possible in processes as $a.P~|~a.Q$ or $a.P+a.Q$.
	Auto concurrency~\cite[Definition 9.5]{Glabbeek2001} forbids the first kind of processes. But we need a stronger condition, a sort of auto conflict, to forbid the second, that is not ruled out by the collapse function (\autoref{def:collapse}).
	Hence in the following we do not consider processes that exhibit auto concurrency \emph{or} auto conflict.
	
	The problem is specific to the encoding in configuration structures.
	It appears in the encoding of Winskel~\cite{Winskel1982}, and is treated thanks to \emph{tags} that discriminates between the right- and the left-hand side of the sum and of the product~\cite{Winskel1995}.
	Hence we can retrieve the whole class of processes by adding more information on the labels, at the cost of a more cumbersome presentation.
	
\end{remark}

\begin{proposition}	\label{prop:HHPB_congr}
	\(\enc{P_1} \sbisim \enc{P_2} \implies \forall C, \enc{C[P_1]} \sbisim \enc{C[P_2]}
	\)
\end{proposition}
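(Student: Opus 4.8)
The plan is to show that HHPB is a congruence with respect to the context-building operations of the calculus, exploiting the compositional nature of the encoding. Since any context $C$ is built by the grammar of \autoref{context-def} from prefixing, sum-with-a-process, parallel-with-a-process, and restriction, and since $\enc{C[P]}$ is defined inductively to mirror exactly these operations on configuration structures (via \autoref{cat-op-def}), it suffices to prove that $\sbisim$ is preserved by each of these four operations, i.e.\ that for configuration structures $\conf_1 \sbisim \conf_2$ one has $\alpha.\conf_1 \sbisim \alpha.\conf_2$, $\conf_1 + \conf \sbisim \conf_2 + \conf$, $\conf_1 \| \conf \sbisim \conf_2 \| \conf$, and $\conf_1 \restr E_a \sbisim \conf_2 \restr E_a$. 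The result then follows by a straightforward induction on the structure of $C$.

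First I would fix a witnessing relation $\rel$ for $\conf_1 \sbisim \conf_2$, containing $(\emptyset,\emptyset,\emptyset)$ and closed under the forward/backward clauses of \autoref{sbisim_h-def}, and for each operation exhibit a new relation built from $\rel$ and verify the three clauses. For prefix, the new relation pairs $(\emptyset,\emptyset,\emptyset)$ with $(\set{e_\alpha}\cup x_1, \set{e_\alpha}\cup x_2, f\cup\set{(e_\alpha,e_\alpha)})$ for $(x_1,x_2,f)\in\rel$; the only configurations below the prefix event are $\emptyset$ and $\set{e_\alpha}$, so the forward/backward moves either fire/un-fire $e_\alpha$ (trivially matched) or lie above it and are handled by $\rel$, using that $\alpha.\conf\setminus\set{e_\alpha}\iso\conf$. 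For coproduct, a configuration of $\conf_i + \conf$ is either $\set{1}\times y$ with $y\in C_i$ or $\set{2}\times z$ with $z\in C$; I would take the disjoint union of (a copy of $\rel$ tagged into the left summand) and (the identity relation on configurations of $\conf$ tagged into the right summand), together with $(\emptyset,\emptyset,\emptyset)$; once a first event on one side is chosen the execution stays in that summand, so the clauses reduce to those for $\rel$ or to reflexivity. For restriction, $\conf_i\restr E_a$ keeps exactly the configurations of $\conf_i$ contained in $E_a$, so $\rel$ restricted to such pairs works directly, noting that label-preservation of $f$ guarantees that $x_1\subseteq E_a\iff x_2\subseteq E_a$.

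The main obstacle is parallel composition, since $\|$ is defined as a product followed by relabelling and restriction, and products of configuration structures genuinely mix events of the two components; a single move of $\conf_i\|\conf$ may be a synchronisation pairing an event of $\conf_i$ with an event of $\conf$. Here I would use \autoref{prop:cause_projection}: causality in the product decomposes into causality coming from $\conf_i$ or from $\conf$, so a configuration $w\in\conf_i\times\conf$ projects to $\pi_1(w)\in C_i$ and $\pi_2(w)\in C$, and the label- and order-structure of $w$ is determined by these projections together with the synchronisation pattern (which matches events of $\conf_i$ with events of $\conf$). I would define the candidate relation on $\conf_1\|\conf$ and $\conf_2\|\conf$ by: $(w_1,w_2,g)$ is related when $\pi_1(w_1),\pi_1(w_2)$ (the $\conf_1$- resp.\ $\conf_2$-projections) are related by $\rel$ via some $f$, $\pi_2(w_1)=\pi_2(w_2)$ (the $\conf$-projections coincide, up to the identity), and $g$ is the induced bijection obtained by combining $f$ with the identity on the $\conf$-part and respecting synchronisations. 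Checking the forward clause then means: given a move of $w_1$, classify it by its projections — if it only moves the $\conf$-part, mimic it with the same $\conf$-move on the other side; if it only moves the $\conf_i$-part, use the forward clause of $\rel$; if it is a synchronisation, combine both, using that $\rel$ matches the $\conf_i$-event with a same-label event and the $\conf$-event is literally the same. The delicate point — which I expect to be where most of the work sits — is verifying that the combined map $g'$ remains order-preserving: this is exactly where \autoref{prop:cause_projection} is needed, to argue that any new causal dependency in the extended configuration comes from one of the two projections, both of which are order-preserving (the $\conf_i$-projection because $f'$ is, the $\conf$-projection because it is the identity). The backward clause is symmetric. The auto-concurrency/auto-conflict restriction of \autoref{rem-concur} is what makes the matching event unique, so that the construction is well defined and the induced $g$ is genuinely a bijection.
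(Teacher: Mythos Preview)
Your proposal is correct and follows essentially the same route as the paper. The paper only spells out the parallel-composition case (which it identifies as the only non-trivial one), and for that case your candidate relation---pair $(w_1,w_2,g)$ when the $\conf_i$-projections are related by $\rel$, the $\conf$-projections coincide, and $g$ acts as $f$ on the first coordinate and as the identity on the second---is exactly the relation $\rel_c$ the paper constructs; the use of \autoref{prop:cause_projection} to verify that $g$ is order-preserving is likewise the paper's key step. One minor point: the auto-concurrency/auto-conflict restriction is not actually needed here (the bijection $g$ is well defined simply because $f$ is a bijection and the $\conf$-component is handled by the identity); that restriction enters only later, in the proof of \autoref{main-thm}.
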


The proof, exposed in \autoref{appendix}, amounts to carefully build a relation between \(\enc{C[P_1]}\) and \(\enc{C[P_2]}\) that reflects the known bissimulation between \(\enc{P_1}\) and \(\enc{P_2}\).
Its uses that causality in a product is the result of the entanglement of the causality of its elements (\autoref{prop:cause_projection}).

\begin{theorem}
	\label{main-thm}
	\(\enc{P_1} \sbisim \enc{P_2} \iff \enc{P_1} \sbfbc \enc{P_2}\)
\end{theorem}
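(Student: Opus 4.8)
The theorem states the equivalence \(\enc{P_1} \sbisim \enc{P_2} \iff \enc{P_1} \sbfbc \enc{P_2}\), i.e.\ that HHP-bisimilarity on the denotations coincides with the strong back-and-forth barbed congruence. The plan is to prove the two implications separately. The forward direction (\(\Rightarrow\)) is essentially already in hand: if \(\enc{P_1} \sbisim \enc{P_2}\) then by \autoref{prop:HHPB_congr} we get \(\enc{C[P_1]} \sbisim \enc{C[P_2]}\) for every context \(C\), so it suffices to observe that HHPB implies the barbed back-and-forth bisimulation of \autoref{bisim-cs}. This last observation is routine: given an HHP bisimulation \(\rel\) (a set of triples \((x_1,x_2,f)\)), project away the bijection component to get a relation on configurations; since \(f\) preserves labels, matching \(\tau\)-moves are matched by \(\tau\)-moves in both directions, and the barb condition follows because a configuration \(x_1\) enabling an \(e_1\) with \(\labl_1(e_1)\neq\tau\) must, by the forward clause, be matched by \(x_2\) enabling \(e_2\) with the same label. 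Hence \(\enc{C[P_1]}\sbbisim\enc{C[P_2]}\) for all \(C\), which is exactly \(\enc{P_1}\sbfbc\enc{P_2}\).

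The hard direction is \(\Leftarrow\): from \(\enc{P_1}\sbfbc\enc{P_2}\) deduce \(\enc{P_1}\sbisim\enc{P_2}\). As announced in the introduction to \autoref{sec:res}, I would argue by contradiction, mimicking the CCS proof of~\cite{Milner1992}. Assume \(\enc{P_1}\not\sbisim\enc{P_2}\). By \autoref{Fn-Bn-and-h}, failure of HHPB is detected at some finite level: there is a least \(n\le k\) and a configuration \(x_1\in\conf_1\) such that no \(x_2,f\) give \((x_1,x_2,f)\in\Forw{n}\cap\Backw{n}\) (or the symmetric situation with the roles of \(\conf_1,\conf_2\) swapped). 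By \autoref{prop:size_config}, candidate partners for \(x_1\) have the same cardinality, so the mismatch is either a forward obstruction (some reachable \(x_1'\) with \(x_1\redl{e_1}x_1'\) that no partner can mimic while staying in \(\Forw{n+1}\)) or a backward obstruction at level \(n\). The core of the argument is to turn this obstruction into a discriminating context \(C[\cdot]\): one builds a context whose own events "guard" the process so that, along the unique trace reaching \(x_1\) (uniqueness is guaranteed by the collapse hypothesis, \autoref{def:collapse}, and the no-auto-concurrency/no-auto-conflict restriction of \autoref{rem-concur}), exactly the label-and-order pattern of the events in \(x_1\) is forced; the context then offers a fresh barb \(a\) precisely at the state where \(P_2\) (say) can no longer match. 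Then \(\enc{C[P_1]}\downarrow_a\) but \(\enc{C[P_2]}\not\downarrow_a\) (or a back-step is available on one side and not the other), contradicting \(\enc{P_1}\sbfbc\enc{P_2}\). Running the back-and-forth congruence up and down the lattice is what lets the context also probe the backward obstructions, which is why plain barbed congruence of CCS does not suffice here.

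The main obstacle is the explicit construction of the discriminating context and the bookkeeping that it faithfully exposes the level-\(n\) mismatch. Two points need care. First, a context can only \emph{add} causality to a process, never remove it (\autoref{prop:cause_projection} and the remark following it); so the context must be designed so that the causalities it introduces, intersected with the process behaviour, isolate exactly the configuration \(x_1\) and its immediate successors — this is where the collapse hypothesis and the exclusion of auto-concurrency and auto-conflict are essential, since they guarantee that the label sequence plus the order determines the configuration uniquely, so the context "knows" which configuration it has driven the process into. Second, because the relations \(\Backw{i}\) are layered on top of \(\Forw{i}\) (\(\Backw{i}\subseteq\Forw{i}\)), one has to be careful whether the first failure is a forward or a backward one and build the corresponding context accordingly; the back-and-forth nature of \(\sbfbc\) is exactly what makes the backward case expressible, via a context that first lets the process go forward past the critical point and then observes whether it can backtrack in the required way. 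Once the context is in place, deriving the contradiction from \autoref{soundness-bissim} and the operational correspondence is mechanical.
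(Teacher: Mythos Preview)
Your forward direction is exactly the paper's: \autoref{prop:HHPB_congr} plus the trivial observation that an HHPB projects to a back-and-forth barbed bisimulation. For the hard direction you correctly identify contraposition via \autoref{Fn-Bn-and-h} and a discriminating context, so the overall architecture matches.

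Where your plan diverges is in the shape of the context and of the contradiction. You aim for a \emph{direct} barb mismatch: a context that drives the process along the trace to \(x_1\) and then exposes one fresh barb \(a\) that \(P_1\) sees and \(P_2\) does not. The paper does something rather different. Its context is the parallel product
\[
C[\cdot] \;:=\; \prod_{e_i\in x_1}\bigl(\overline{\labl(e_i)} + c_{e_i}\bigr)\ \big|\ [\cdot],
\]
with one fresh name \(c_{e_i}\) per event of \(x_1\). The fresh names are not offered \emph{at} a mismatch point; they are available from the start and \emph{disappear} when the corresponding component synchronises. Thus the relevant observation is the \emph{absence} of all barbs \(c_{e_i}\): a closed \(y_1\in\enc{C[P_1]}\) with \(\pi_{C,P_1}(y_1)=x_1\) satisfies \(y_1\not\downarrow_{c_{e_i}}\) for every \(i\), and the congruence then produces a matching \(y_2\in\enc{C[P_2]}\) with the same (non-)barbs. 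The contradiction is \emph{indirect}: one does not exhibit a barb that separates, but instead projects \(y_2\) to \(x_2:=\pi_{C,P_2}(y_2)\) and proves, by induction on the level and by \emph{extending} the context with further parallel components \((\overline{\labl(e_1)}+c_{e_1})\), that \((x_1,x_2,f)\in\Forw{n}\) and then \((x_1,x_2,f)\in\Backw{n}\), contradicting the choice of \(x_1\).

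What the paper's route buys is that the backward case is handled uniformly by the same inductive scheme (it sits on top of the \(\Forw{n}\) argument), and the no-auto-concurrency / collapse hypotheses are used precisely to recover \(x_2\) from the barb-pattern of \(y_2\) (your observation that ``label plus order determines the configuration'' is used here, not to steer a sequential context). Your direct-mismatch strategy is plausible in principle, but you would still have to explain how a \emph{backward} obstruction manifests as a difference observable through barbs and \(\tau\)-moves of a single context; the paper sidesteps this by never looking for a separating barb at all.
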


\newcommand{\horiz}{5} \pgfmathsetmacro{\horizhoriz}{2*\horiz}

\begin{figure}
	For \(i \in \{1, 2\}\), we have:
	
	\begin{tikzpicture}
		\fill[colorp1] (0,0) ellipse (1cm and 2cm) node[below=1.5cm, left=.7cm, black]{\(\enc{P_i}\)};
		\draw (0, -1) node[below]{\(x_i\)} node (x1) {$\bullet$};
		\draw (0, 1) node[above]{\(x'_i\)} node (x'1) {$\bullet$};
		\draw [->] (x1) -- (x'1);
		\fill[colorp1] (\horiz,0) ellipse (1cm and 2cm) node[below=1.5cm, left=.7cm, black]{\(\enc{C[P_i]}\)};
		\draw (\horiz, -1) node[below]{\(y_i\)} node (y1) {$\bullet$};
		\draw (\horiz, 1) node[above]{\(y'_i\)} node (y'1) {$\bullet$};
		\draw [->] (y1) -- (y'1);
		\fill[colorp1] (\horizhoriz,0) ellipse (1cm and 2cm) node[below=1.5cm, left=.7cm, black]{\(\enc{C'[P_i]}\)};
		\draw (\horizhoriz, 1) node[above]{\(z'_i\)} node (z'1) {$\bullet$};
		\draw [->, double] (y1) to [bend right] node[midway,above,sloped] {\(\pi_{C, P_i}\)} (x1);
		\draw [->, double] (z'1) to [bend right] node[midway,above,sloped] {\(\pi_{C', C[P_i]}\)} (y'1);
		\draw [->, double] (y'1) to [bend right] node[midway,above,sloped] {\(\pi_{C, P_i}\)} (x'1);
	\end{tikzpicture}
	
	We start with \(y_1 \sbfbc y_2\), then prove that \(z'_1 \sbfbc z'_2\), to end up with \((x'_1, x'_2, f) \in \Forw{n} \cap \Backw{n}\).
	
	\caption{Configurations Structures by the end of the proof of \autoref{main-thm}}
	\label{fig-main-thm}
\end{figure}

\begin{proof}
	The left-to-right direction follows from the definition of \(\sbisim\) (\autoref{sbisim_h-def}) and from \autoref{prop:HHPB_congr}.
	
	We prove the other direction 	by contraposition: let us suppose that \(\enc{P_1}\sbfbc\enc{P_2}\) and $\enc{P_1} \not\sbisim \enc{P_2}$, we will find a contradiction.
	\autoref{fig-main-thm} presents the general shape of the configurations at the end of the proof.
	
	As $\enc{P_1} \not\sbisim \enc{P_2}$, by \autoref{Fn-Bn-and-h}, there exists $x_1\in\encc{P_1}$ such that $\forall x_2\in\encc{P_2}$, $(x_1,x_2,f)\notin F_n\cap B_n$ holds.
	Note that we can only consider $x_2$ such that $\card(x_1)=\card(x_2)=n$, and that we use the projections \(\pi_{C, P}\) (\autoref{def-projection}) to separate the events of the process \(P\) from the events of the context \(C\).
	
	Let us show that for any $x_1$ we can define
	$ C[\cdot] := \prod_{e_i \in x_i} (\overline{\labl(e_i)} + c_{e_i}) | [\cdot] $
	where $c_{e_i}\notin\names(P_1)\cup\names(P_2)$, such that the following holds
	\begin{itemize}
		\item \(\exists y_1 \in \enc{C[P_1]}\) such that \(y_1\) is closed, \(\pi_{C, P_1} (y_1) = x_1\) and \(y_1 \not\downarrow{c_{e_i}}\) for all \(e_i \in x_1\);
		\item We supposed that \(\enc{P_1}\sbfbc \enc{P_2}\), so \(\enc{C[P_1]}{\sbisim}^{\tau}\enc{C[P_2]}\).
		      Hence \(\exists y_2 \in \enc{C[P_2]}\) such that \((y_1, y_2, g) \in {\sbisim}^{\tau}\) and \(y_2 \not\downarrow{c_{e_i}}\) for all \(e_i \in x_1\).
	\end{itemize}
	Moreover we show that $(x_1,\pi_{C, P_1}(y_2),f)\in F_n$, for some $f$ a label and order preserving bijection.
	
	Let us start by showing that such an $f$ exists.
	
	We denote $\pi_{C, P_1}(y_2)$ with $x_2$.
	We have that \(\forall e_1, e'_1 \in x_1\), and \(e_2 \in x_2\),
	\begin{equation}
		\label{reason2}
		e_2 \in x_2 \iff e_1 \in x_2\text{ and } \labl(e_1) = \labl (e_2)
	\end{equation}
	\begin{align}
		e_1 <_{x_1} e'_1 & \Longrightarrow \pi_{C, P_1}^{-1} (e_1) <_{y_1} \pi_{C, P_1}^{-1} (e'_1) \label{reason}        \\
		                 & \Longrightarrow g(\pi_{C, P_1}^{-1} (e_1)) <_{y_2} g(\pi_{C, P_1}^{-1} (e'_1)) \label{reason3} 
	\end{align}
	
	Remark that \eqref{reason2} follows from \(y_2 \not\downarrow{c_{e_i}}\) and from the fact that if $y_1$ is closed we can show by contradiction that $y_2$ is closed as well.
	Secondly, \eqref{reason} follows from \autoref{prop:cause_projection} and from the form of the context, which does not induce any causality between the events.
	Lastly, \eqref{reason3} follows from $g$ being an order preserving bijection between $y_1$ and $y_2$.
	
	We proceed by induction to show that \((x_1, x_2, f) \in \Forw{n}\).
	\begin{itemize}
		\item If \(n = k\)  for \(k\) the maximal cardinal of events in \(\enc{P_1}\).
		      This case is trivial, as \(\card(x_1) = \card(x_2) = k\).
		\item If \(n = k - 1\) for \(k>1\), we prove that \((x_1, x_2, f) \notin \Forw{k-1}\) leads to a contradiction.
		      There are two cases:
		      \begin{align}
		      	\not\exists x'_1, x_1 \redl{e_1} x'_1, \exists x'_2, x_2 \redl{e_2} x'_2 \label{case-no-transition}                                         \\
		      	\exists x'_1, x_1 \redl{e_1} x'_1, \forall x'_2, x_2 \redl{e_2} x'_2 \text{ and } (x'_1, x'_2, f') \notin \Forw{k}\label{case-no-extension} 
		      \end{align}
		      
		      The implication \eqref{case-no-transition} is easier: if \(\exists x'_2, x_2 \redl{e_2} x'_2\), then, as a context cannot remove transitions from the original process, \(\exists y'_2, y_2 \redl{(e_2, \star)} y'_2\).
		      As \(\enc{C[P_2]} \sbfbb \enc{C[P_1]}\), \(\exists y'_1, y_1 \redl{(e_1, \star)} y'_1\), and a similar argument on the context shows that \(\exists x'_1, x_1 \redl{e_1} x'_1\). Hence a contradiction.
		      
		      To prove \eqref{case-no-extension} requires more work and uses the induction hypothesis.
		      First, let
		      $C'[\cdot] := C[\cdot] | (\overline{\labl(e_1)} + c_{e_1})$.
		      By induction hypothesis, there exists \(z'_1 \in \enc{C'[P_1]}\) such that \(z'_1\) is closed, \(\pi_{C', C[P_1]} (z'_1) = y'_1\) and \(z'_1 \not\downarrow{c_{e_i}}\) and \(z'_1 \not\downarrow{c_{e_1}}\) for all \(e_i \in x_1\).
		      
		      By hypothesis, \(\enc{P_1}\sbfbc \enc{P_2}\), hence \(\enc{C'[P_1]}\sbfbb \enc{C'[P_2]}\) implies that $\exists z_2', h'$ such that \(z_2 \in \enc{C'[P_2]}\) and \(z'_2 \not\downarrow{c_{e_i}}\) and \(z'_2 \not\downarrow{c_{e_1}}\) for all \(e_i \in x_1\).
		      
		      Let us denote the projection \(\pi_{C', C[P_2]} (z'_2)\) as $y_2''$. As $z_1'$ is closed, so is $z_2'$. We can infer using the fact that $z_2'$ is closed and that \(z'_2 \not\downarrow{c_{e_1}}\) that $\exists e_2''\in y_2''$ such that $\labl(e_2'')=\labl(e_1)$ and $y_2''\setminus\set{e_2''}$ is closed.
		      
		      From \(z'_2 \not\downarrow{c_{e_i}}\) we have that \(y''_2 \not\downarrow{c_{e_i}}\).
		      As there exists a label and order preserving bijection $h'$ between $z_1'$ and $z_2'$, and as we forbid auto concurrency and \enquote{ambiguous} non deterministic sum (\autoref{rem-concur}), we conclude that $\pi_{C, P_2} (y''_2)=x_2'$ and \(\pi_{C', P_2} (z'_2)=x_2'\).
		      
		      Then we have \(\pi_{C', P_1}(z'_1) = x'_1, \pi_{C', P_2}(z'_2) = x'_2\), and by induction hypothesis, \( (x'_1, x'_2, f) \in \Forw{k}\).
		      But as \(x_1 \redl{e_1} x'_1\) and \(x_2 \redl{e_2} x'_2\), we have that \( (x_1, x_2, f) \in \Forw{k-1}\), hence a contradiction.
	\end{itemize}	
	To prove that \((x_1, x_2, f) \in \Backw{n}\), we use induction, the base case (\( n = 0\)) being trivial.
	The step case goes along the line of (and uses) the proof that \((x_1, x_2, f) \in \Forw{n}\).
\end{proof}

\section*{Conclusion}
\addcontentsline{toc}{section}{Conclusion}
We showed that, for a restricted class of RCCS processes (without recursion, auto-concurrency or auto-conflict) hereditary history preserving bisimilation has a contextual characterisation in CCS.
We used the barbed congruence defined on RCCS as the congruence of reference, adapted it to configuration structures and then showed a correspondence with HHPB.
As a proof tool, we defined two inductively relations that approximate HHPB.
Consequently we have that adding reversibility into the syntax helps in retrieving some of the discriminating power of configuration structures.

This work follows notable efforts~\cite{Phillips2007,Lanese2010}
to understand
equivalences for reversible processes.
There are many interesting continuations.
A first one as suggested in the introduction, is to move to weak equivalences, which ignore silent moves \(\tau\) and focus on the observable part of a process.
This is arguably a more interesting relation than the strong one, in which processes have to mimic each other's silent moves.
Even if such a relation on configuration structures exists~\cite{Vogler1993} one still has to show that this is indeed the relation we expect.
In the denotational setting, the adjective \enquote{weak} has sometimes~\cite{Phillips2012,Glabbeek2001} a different meaning: it stands for the ability to change the label and order preserving bijection as the relation grows, to modify choices that were made before this step.

The relations defined so far simulate forward (resp.\ backward) transitions only with forward (resp.\ backward) transitions, and only consider \enquote{forward} barb.
Ignoring the direction of the transitions could introduce some fruitful liberality in the way processes can simulate each other.
Depending on the answer, \(a + \tau . b\) and \(a + b\) would be weakly bisimilar or not. Moreover one can also consider irreversible moves and understand what are the meaningful equivalences in the setting of transactions~\cite{Danos2005}.

Context---which plays a major part in these equivalences---raises questions on the memory handling of RCCS: what about context that could \enquote{fix the memory} of an incoherent process?
For instance, $C=\mem{1,a,0}\vartriangleright P'|[\cdot]$ and $P=\mem{1,\bar{a},0}\vartriangleright P''$ are incoherent, but $C[P]$ is coherent and can backtrack.

One can easily retrieve auto concurrency and auto conflict by tagging the transitions.
Bisimulations have then to consider the tags. Maybe of less interest but important for the generality of these results, one should include infinite processes as well.
This needs a rework of the relations in \autoref{ForwBackwDef} used to approximate the HHPB.

\paragraph{Acknowledgement}
We would like to thank D. Varacca and J. Krivine for the very useful discussions as well as the referee for his helpful remarks.

\bibliographystyle{eptcs}

\newpage
\appendix
\renewcommand{\thesection}{\Alph{section}}

\section{Appendices}
\label{appendix}
\addcontentsline{toc}{section}{Appendices}

\subsection{Additional Definition}
\begin{definition}[Context for configuration structures]
	\label{def-projection}
	
	Let $P$ a CCS a process and $C[\cdot]$ a context. Then \(\pi_{C, P}\) is as the projection morphism \(\pi_{C, P}:\encc{C[P]}\to\encc{P}\) defined inductively on the structure of $\encc{C[P]}$:
	\begin{itemize}
		\item $\pi_{C, P}:\enc{\alpha.C'[P]}\to\enc{P}$ is defined as $\pi_{C, P}(e)=\pi_{C', P}(e)$;
		\item $\pi_{C, P}:\enc{C'[P]|P'}\to\enc{P}$ is defined as $\pi_{C, P}(e)=\pi_{C', P}(\pi_1(e))$, where $\pi_1:\enc{C'[P]|P'}\to\enc{C'[P]}$ is the projection morphism defined by the product;
		\item $\pi_{C, P}:\enc{C'[P]+P'}\to\enc{P}$ is defined as $\pi_{C, P}(e)=\pi_{C', P}(\pi_1(e))$, where $\pi_1:\enc{C'[P]+P'}\to\enc{C'[P]}$ is the projection morphism defined by the coproduct;
		\item $\pi_{C, P}:\enc{(a)C'[P]}\to\enc{P}$ defined as $\pi_{C, P}(e)=\pi_{C', P}(e)$.
	\end{itemize}
\end{definition}
That the projection $\pi_{C, P}:\enc{C[P]}\to\enc{P}$ is a morphism follows by a simple case analysis.
We naturally extend \(\pi_{C, P}\) to configurations.

\subsection{Proof of \autoref{prop-soundness-rccs}}
\label{soundness-of-rccs}

\begin{proof}
	Without loss of generality, the trace $\orig{R}\red^{\star} R$ can be considered to be only forward: every reversible trace can be re-arranged as a succession of backward moves followed by a succession of forward moves~\cite[Lemma~10]{Danos2004}, but \(\orig{R}\) cannot go backward.
	We proceed by induction on the trace $\orig{R}\red^{\star} R$.
	Let \(\funaddress{\encc{\orig{R}}}{\emptyset}{\orig{R}\red^{\star} R}=x_n\), for \(x_n\in\encc{\orig{R}}\) and such that \(\encc{\erase(R)}=\encc{\erase(\orig{R})}\setminus x_n\).
	We have to show that
	\[\funaddress{\encc{\orig{R}}}{\emptyset}{\orig{R}\red^{\star} R\redl{a}R_{n+1}}=x_n\cup\set{e}\text{ and }x_n\cup\set{e}\in\encc{\orig{R}}\text{, with }\enc{\erase(R_{n+1})}=\encc{\orig{R}}\setminus (x_n\cup\set{e}).\]
	
	We have that $\funaddress{\encc{\orig{R}}}{\emptyset}{\orig{R}\red^{\star} R\redl{a}R_{n+1}}=\funaddress{\encc{\orig{R}}}{x_n}{R\redl{a}R_{n+1}}$ and that $\encc{\erase(R)}=\encc{\orig{R}}\setminus x_n$.
	We want to show that for $R\redl{\alpha}R_{n+1}$ ,$\exists!\set{e}\in\encc{\erase(R)}$ such that $\encc{\erase(R_{n+1})}=\encc{\erase(R)}\setminus \set{e}$.
	We consider only the case $\alpha=a$, the rest is similar.
	We rewrite $R\congru(b_1\dots b_n)(m_1\rhd a.P_1~|~P_2)$ and $R_{n+1}\congru(b_1\dots b_n)(m_1\rhd P_1~|~P_2)$ and hence $\erase(R)=(b_1\dots b_n)(a.P_1~|~P_2)$ and $\erase(R_{n+1})=(b_1\dots b_n)(P_1~|~P_2)$. We want to show that $\exists !e\in\encc{\orig{R}}\setminus x_n$ such that $\labl(e)=\alpha$ and
	\[\encc{\erase(R_{n+1})}=\encc{\erase(\orig{R})}\setminus (x\cup\set{e}).\]
	But $\encc{\erase(\orig{R})}\setminus (x\cup\set{e})=\encc{\erase(R)}\setminus \set{e}$. Hence it is enough to show that $\exists !e\in\encc{\erase(R)}$ such that $\labl(e)=\alpha$ and
	\[\encc{\erase(R_{n+1})}=\encc{\erase(R)}\setminus \set{e}\]
	which is equivalent to show that
	\[ \encc{(b_1\dots b_n)(P_1~|~P_2)}=\encc{(b_1\dots b_n)(a.P_1~|~P_2)}\setminus \set{e}.\]
	From \autoref{prop:encode_ccs} such an event exists and its uniqueness follows from the collapsing hypothesis (\autoref{def:collapse}).
	
	Let us prove that if $x\in\encc{(b_1\dots b_n)(P_1~|~P_2)}$ then $x\in\encc{(b_1\dots b_n)(a.P_1~|~P_2)}\setminus \set{e}$. The other direction is similar. Let us unfold the definition of the encoding.
	We have the following equalities:
	\begin{align*}
		\encc{(b_1\dots b_n)(P_1~|~P_2)}   & =(b_1\dots b_n)(\encc{P_1}\times\encc{P_2})\restr X   \\
		\encc{(b_1\dots b_n)(a.P_1~|~P_2)} & =(b_1\dots b_n)(a.\encc{P_1}\times\encc{P_2})\restr Y 
	\end{align*}
	If $x\in(b_1\dots b_n)(\encc{P_1}\times\encc{P_2})\restr X$ then
	\begin{equation}
		\label{eq:labl}
		\nexists e\in x, \labl(e)\in\set{b,\out b,0}.
	\end{equation}
	Hence $x\in(\encc{P_1}\times\encc{P_2})$. Let $\pi_1$, $\pi_2$ the two projections defined by the product. Then
	\begin{equation}
		\label{eq:proj}
		\pi_1(x)\in\encc{P_1}\text{ and }\pi_2(x)\in\enc{P_2}.
	\end{equation}
	As $\pi_1(x)\in\encc{P_1}$, and from the definition of $\encc{a.P_1}$ we have that $\exists e_1$, $\labl(e_1)=a$ and such that $\set{e_1}\cup \pi_1(x)\in a.\encc{P_1}$. From \autoref{eq:proj} we have that $\exists x_2\in a.\encc{P_1}\times\encc{P_2}$ such that $\pi_1(x_2)=\set{e_1}\cup \pi_1(x)$ and $\pi_2(x_2)=\pi_2(x)$. Hence $\exists !e$ such that $\pi_1(e)=e_1$, $\pi_2(e)=\star$ and $x_2=\set{e}\cup x$. From \autoref{eq:labl} we have that $x_2\in(b_1\dots b_n)(a.\encc{P_1}\times\encc{P_2})\restr Y$. From the definition of $\encc{\orig{R}}\setminus \set{e}$ we infer that if $x\cup\set{e}\in(b_1\dots b_n)(a.\encc{P_1}\times\encc{P_2})\restr Y$ then $x\in\encc{(b_1\dots b_n)(a.P_1~|~P_2)}\setminus \set{e}$.
	
	From $\encc{\erase(R)}=\encc{\orig{R}}\setminus x_n$, we have that $\forall y\in \encc{\erase(R)}$, $\exists y\cup x_n\in\encc{\orig{R}}$. In particular $x_n\cup\set{e}\in\encc{\orig{R}}$.
	Hence $\funaddress{\encc{\orig{R}}}{\emptyset}{\orig{R}\red^{\star} R\redl{a}R_{n+1}}=x_n\cup\set{e}$ with $\encc{\erase(R_{n+1})}=\encc{\orig{R}}\setminus (x_n\cup\set{e})$.
\end{proof}

\subsection{Proof of \autoref{prop:HHPB_congr}}
\label{proof:HHPB_congr}

\begin{figure}
	{\centering
		\begin{tikzpicture}
			\fill[colorp1] (0,0) ellipse (1cm and 2cm) node[above=2cm, black]{\(\enc{P_1} = \mem{E_1, C_1, \labl_1}\)};
			\draw (0, -1) node[below]{\(x_1\)} node (x1) {$\bullet$};
			\draw (0, 1) node (x'1) {$\bullet$};
			\draw [->] (x1) -- node[right]{\(e''_1\)} (x'1);
			\fill[colorp1] (6, 0) ellipse (1cm and 2cm) node[above=2cm, black]{\(\enc{P_1 | Q} = \mem{E'_1, C'_1, \labl'_1} = (\enc{P_1} \times \enc{Q}) \restr X_1 \)};
			\fill[colorp1] (6.7, 0) ellipse (1cm and 2cm);
			\draw (6.35, -1) node[below]{\(y_1\)} node (y1) {$\bullet$};
			\draw (6.35, 1) node[above]{\(y'_1\)} node (y'1) {$\bullet$};
			\draw [->] (y1) -- node[right]{\(e'' = (e''_1, e''_q)\)}(y'1);
			\begin{scope}[yshift=-4.3cm]
				\fill[colorp2] (0,0) ellipse (1cm and 2cm) node[below=2cm, black]{\(\enc{P_2} = \mem{E_2, C_2, \labl_2}\)};
				\draw (0, -1) node[below]{\(x_2\)} node (x2) {$\bullet$};
				\draw (0, 1) node[above]{\(x''_2\)} node (x'2) {$\bullet$};
				\draw [->] (x2) -- node[right]{\(e''_2\)} (x'2);
				\fill[colorp2] (6, 0) ellipse (1cm and 2cm) node[below=2cm, black]{\(\enc{P_2 | Q} = \mem{E'_2, C'_2, \labl'_2} = (\enc{P_2} \times \enc{Q}) \restr X_2\)};
				\fill[colorp2] (6.7, 0) ellipse (1cm and 2cm);
				\draw (6.35, -1) node[below]{\(y_2\)} node (y2) {$\bullet$};
				\draw (6.35, 1) node[above]{\(y'_2\)} node (y'2) {$\bullet$};
				\draw [->] (y2) -- node[right]{\(e'_2\)}(y'2);
			\end{scope}
			\draw [<->, double] (y2) to [bend left] node[midway, left] {\(f_c\)} ($(y1)-(.3, 0)$);
			\draw [<->, double] (x2) to [bend left] node[midway, left] {\(f_c\)} ($(x1)-(.3, 0)$);
			\draw [->, double] (y2) to [bend right] node[midway, below ,sloped] {\(\pi_2\)} (x2);
			\draw [->, double] (y1) to [bend right] node[midway, below ,sloped] {\(\pi_1\)} (x1);
			\draw [->, double] (y'1) to [bend right] node[midway, below ,sloped] {\(\pi_1\)} (x'1);
			\node[rectangle callout,draw,inner sep=2pt,fill=colorp1,
				callout absolute pointer=(y1.east),
			below right= 25pt and 35pt of y1.north east]
			{\begin{tabular}{c c c c} \(e\) & \( = \)& \(e_1,\) & \(e_q\) \\ \rotatebox{-90}{\(\leqslant\)} & & \rotatebox{-90}{\(\leqslant\)}\(_{\pi_1}\) & \rotatebox{-90}{\(\leqslant\)}\(_{\pi_2}\) \\ \(e'\) & \( =\) & \(e'_1,\) & \(e'_q\)\end{tabular}};
		\end{tikzpicture}
	}
	\caption{Configurations Structures by the end of the proof of \autoref{prop:HHPB_congr}}
	\label{fig-HHPB_congr}
\end{figure}
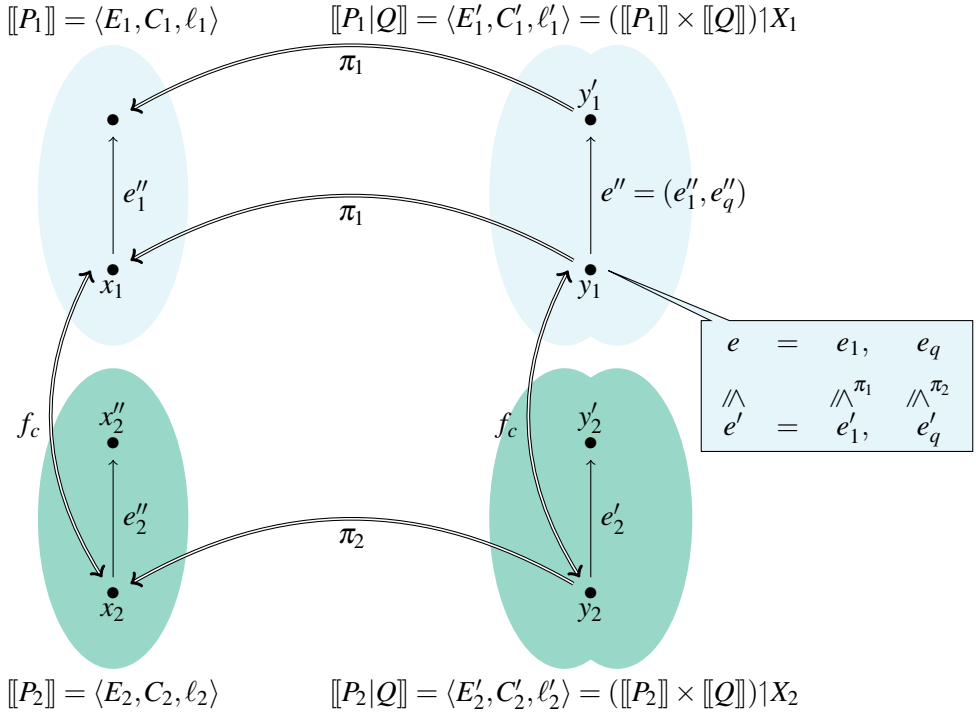

\begin{proof}
	We only consider the following case:
	\[
		\forall P_1, P_2, \enc{P_1}\sbisim\enc{P_2}\implies\forall Q, \enc{P_1\vert Q}\sbisim\enc{P_2\vert Q}
	\]
	
	As $\enc{P_1}\sbisim\enc{P_2}$, there exists $\srel$ a hereditary history preserving bisimulation (HHPB) between $\enc{P_1}$ and $\enc{P_2}$.
	\autoref{fig-HHPB_congr} introduces the variables names and types.
	
	Define $\rel_c\subseteq C'_1\times C'_2\times\power(E'_1\times E'_2)$ as follows:
	\begin{align*}
		(y_1, y_2, f_c) \in \rel_c \iff                                     
		\begin{dcases*}                                                       
		(\pi_1(y_1),\pi_2(y_2),\pi_1\circ f) \in \rel                       \\
		f_c(e)=(\pi_1 \circ f(e)),\pi_2(e))\in y_2\text{ for all }e \in y_1 
		\end{dcases*}                                                         
	\end{align*}
	
	Informally $(y_1, y_2, f_c)$ is in the relation $\rel_c$ if there is $(x_1, x_2, f)$ in $\rel$ such that $x_i$ is the first projection of $y_i$ and such that $f_c$ satisfies the property: for $(e_1, e_{q})\in E'_1$, $f_c(e_1,e_{q})=(f(e_1),e_{q})$ and $(f(e_1),e_{q})\in E'_2$.
	
	Let us show that $\rel_c$ is a HHPB between $\mem{E'_1, C'_1, \labl'_1}$ and $\mem{E'_2, C'_2, \labl'_2}$.
	\begin{itemize}
		\item $(\emptyset,\emptyset,\emptyset)\in\rel_c$;
		\item For $(y_1, y_2, f_c)\in\rel$ we show that $f_c$ is label and order preserving bijection. We have that $f_c$ is defined as $f_c(e)=(\pi_1 \circ f(e)),\pi_2(e))$, for some $f$ label and order preserving bijection such that $(\pi_1(y_1),\pi_2(y_2),\pi_1\circ f)\in\rel$.
		      
		      That $f_c$ is a bijection follows from $f$ a bijection.
		      
		      Let $e \in y_1$ with $\pi_1(e)=e_1$, $\pi_2(e)=e_{q}$, then $f_c(e)=(f(e_1),e_{q})$ for some $f_c$ \st $(\pi(y_1), \pi_2(y_2), f)\in\rel$.
		      We have that $\labl'_1(e)=(\labl_1(e_1),\labl_{Q}(e_{q}))$, hence
		      \[
		      	\labl'_2(f_c(e))=\labl'_2(f(e_1),e_{q})=\big(\labl_2(f(e_1)),\labl_{Q}(e_{q})\big)
		      \]
		      As $f$ is label preserving we get $\labl'_2(f_c(e))=(\labl_1(e_1),\labl_{Q}(e_{q}))$, hence $\labl'_1(e)=\labl'_2(f_c(e))$.
		      
		      Let us now show that for $e, e' \in y_1$, if $e\leq_{y_1} e'$ then $f_c(e)\leq_{y_2} f_c(e')$. We denote $\pi_1(e)=e_1$, $\pi_2(e)=e_{q}$ and $\pi_1(e')=e_1'$, $\pi_2(e')=e_{q}'$.
		      Then from \autoref{prop:cause_projection}
		      \begin{equation}
		      	e\leq_{y_1} e'\implies e_1\leq_{\pi_1(y_1)} e_1'\text{ or }e_{q}\leq_{\pi_2(y_1)} e_{q}'
		      \end{equation}
		      We consider the case where $e_1\leq_{\pi_1(y_1)} e_1'$. As $f$ is order preserving we have that $f(e_1)\leq_{\pi_1(y_2)}f(e_1')$. Then $(f(e_1),e_{q})\leq_{x_2}(f(e_1'),e_{q}')$, as the projections are order reflecting.
		      
		\item Let $(y_1, y_2,f_c)\in\rel_c$ and $y_1 \redl{e''} y_1'$, $y_1'=y_1\cup\set{e''}$. We consider only the case when $\pi_1(e'') = e''_1\neq\star$, $\pi_2(e'')= e''_{q}\neq\star$ as the rest is similar.
		      From the definition of the projections $\pi_1(y_1)$, $\pi_1(y_1')\in C_1'$ and as $\pi_1(e'')=e''_1\neq\star$, we have that $\pi_1(y_1')=\pi_1(y_1)\cup\set{e''_1}$.
		      We reason similarly on $\pi_2(y_1)$ and get
		      \begin{equation}
		      	\label{eq1}
		      	\pi_1(y_1)\redl{e''_1}\pi_1(y_1')\text{ and }\pi_2(y_1)\redl{e''_{q}}\pi_2(y_1').
		      \end{equation}
		      From \autoref{eq1} and as $(\pi_1(y_1), \pi_2(y_2),f)\in\rel$, by definition of \(\rel_c\), we have that
		      \begin{equation}
		      	\label{eq2}
		      	\exists x_2'\text{ \st }\pi_1(y_2)\redl{e''_2}x_2'=x_2\cup\set{e''_2}
		      \end{equation}
		      and
		      \begin{equation}\label{eq3}
		      	f'=f\cup\set{e_1''\leftrightarrow e_2''}
		      \end{equation}
		      such that $(x_1',x_2',f')\in\rel$.
		      From \autoref{eq1} and \autoref{eq2} we have that $\exists y'_2 \in (\enc{P_2}\times\enc{P_{Q}})$ with $\pi_1(y'_2)=x'_2$, $\pi_2(y'_2)=\pi_2(y_1')$ and $\exists e_2'\in y'_2$, $\pi_1(e'_2)=e''_2$, $\pi_2(e'_2)=e_{q}''$.
		      
		      Let us show that $y_2'\notin X_2$. We have that $y_2'\notin X_2$. As $\labl(e_1'')$ and $\labl(e_q'')$ are compatible, then so are $\labl(e_2'')$ and $\labl(e_q'')$, hence $y_2\cup\set{(e_2'',e_q'')}\notin X_2$.
		      
		      Remains to show $(y_1',y_2',f_c')\in\rel$, where $f_c'=f_c\cup\set{e''_1 \leftrightarrow e''_2}$. We have that $(\pi_1(y_1'),\pi_1(y_2'),f')\in\rel_c$ and from \autoref{eq3} that $\pi_1\circ f_c'=f'$. \qedhere
	\end{itemize}
\end{proof}
\end{document}